\let\NAT@parse\undefined
\newcommand{\squeeze}{-1.5mm}
\newcommand{\squeezeSection}{-5mm}
\newtheorem{definition}{Definition}
\newtheorem{assumption}{Assumption}
\newtheorem{proposition}{Proposition}
\newtheorem{corollary}{Corollary}
\def\subsubsection{\@startsection{subsubsection}
                                 {3}
                                 {\z@}
                                 {0ex plus 0.1ex minus 0.1ex}
                                 {0ex}
                                 {\normalfont\normalsize\itshape}}
\title{\LARGE \bf Integrated Analysis of Coarse-Grained Guidance for Traffic Flow Stability
}
\author{Sirui Li, Roy Dong, Cathy Wu
\thanks{Sirui Li is with the Institute for Data, Systems, and Society, Massachusetts Institute of Technology,
        Cambridge, MA, 02139, USA.
        {\tt\small siruil@mit.edu}}%
\thanks{Roy Dong is with the Industrial \& Enterprise Engineering department at the University of Illinois at Urbana-Champaign, Urbana, IL, 61801, USA.
{\tt\small roydong@illinois.edu}}
\thanks{Cathy Wu is with the Laboratory for Information \& Decision Systems; the Institute for Data, Systems, and Society; and the Department of Civil and Environmental Engineering, Massachusetts Institute of Technology,
        Cambridge, MA, 02139, USA.
        {\tt\small cathywu@mit.edu}}%
}
\begin{document}

\maketitle
\thispagestyle{empty}
\pagestyle{empty}

\begin{abstract}
Autonomous vehicles (AVs) enable more efficient and sustainable transportation systems. Ample studies have shown that controlling a small fraction of AVs can smooth traffic flow and mitigate traffic congestion. However, deploying AVs in real-world systems poses challenges due to safety and cost concerns. A viable alternative approach that can be implemented in the near future is \textit{coarse-grained guidance}, where human drivers are guided by real-time instructions, updated every $\Delta$ seconds, to stabilize the traffic. While previous theoretical studies consider stability analysis for continuous AV control, this article presents the first integrated theoretical analysis that directly relates the guidance provided to the human drivers to the traffic flow stability outcome. Casting the problem into the Lyapunov stability framework, this study derives sufficient conditions for coarse-grained guidance with hold length $\Delta$ to stabilize the system, and provides extensions of the analysis to incorporate additional human driving behaviors such as magnitude error and reaction delay. Numerical simulations reveal that the theoretical analysis closely matches simulated results. The analysis further offers insights into the relationship between system parameter and stability criteria, and can be leveraged to design improved controllers with greater maximum hold length.
\end{abstract}

\section{INTRODUCTION}
\label{sec:introduction}
Transportation remains a key priority for public health and safety, climate change mitigation, and economic competitiveness~\cite{national2019traffic, epaghgemission, winston2015transportation}. Everyday traffic is deeply intertwined with these priorities. For instance, studies suggest that mitigating congestion could reduce up to 20\% of CO$_2$ emissions~\cite{barth2008real}.

\begin{figure}
    \centering
    \includegraphics[width=0.5\textwidth]{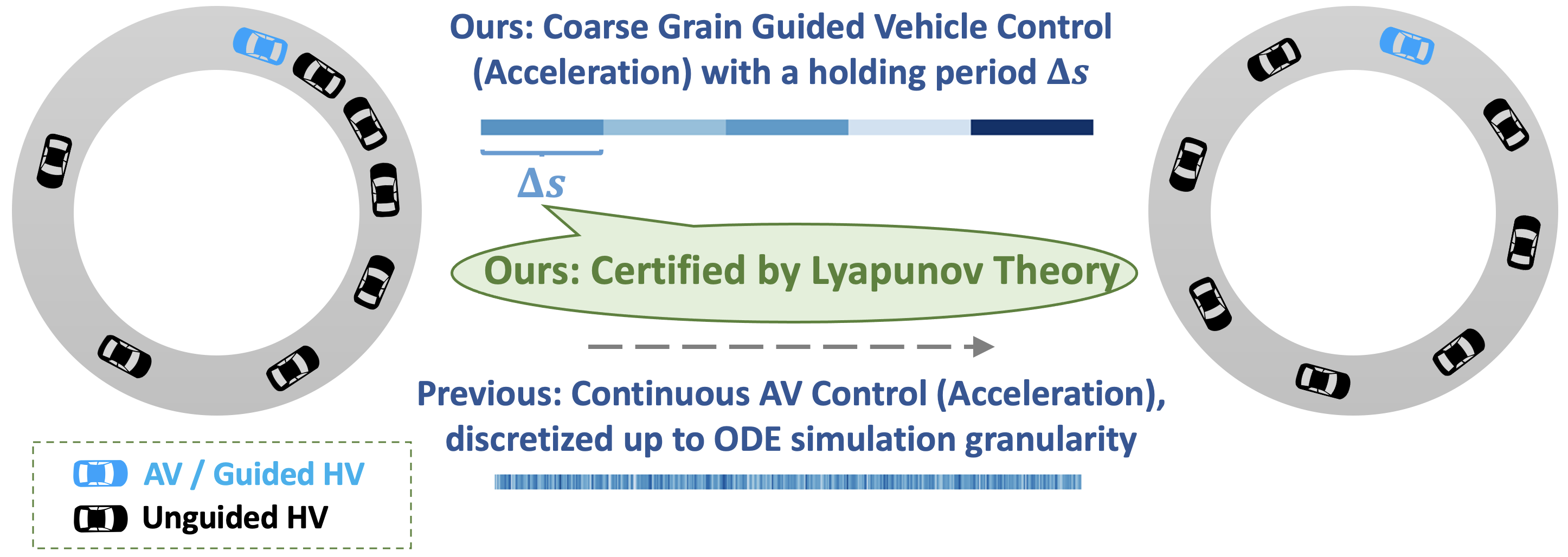}
    \caption{\textbf{Overview.} Consider a traffic system with an AV or guided human-driven vehicle (in blue) and unguided human-driven vehicles (in black). Previous work continuously updates the control action of the AV. In contrast, this work investigates coarse-grained guidance, which periodically (every $\Delta$ seconds) updates the control action. Theoretical guarantees on the hold limit are provided using Lyapunov analysis.}
    \vspace*{-0.3cm}
    \label{fig:illustration}
\end{figure}

Autonomous vehicles (AVs) are a long-anticipated solution to congestion. A single AV has been demonstrated to dampen stop-and-go waves in a circular track with 22 vehicles in a field experiment~\cite{stern2018dissipation}. Up to a 57\% improvement in average velocity is possible in simulation, when using an AV controller designed with deep reinforcement learning~\cite{wu2021flow}. However, it remains challenging to deploy AVs due to safety concerns. 

Providing real-time guidance to human drivers can serve as a middle ground solution to deploy traffic stabilizing control at scale in the near future. This article is motivated by numerous field-tested studies that demonstrate the efficacy of providing real-time guidance to drivers as a means to alter traffic flow~\cite{munoz2013validating, mintsis2017evaluation, bhadani2018dissipation, nice2021can}. In other words, \textit{coarse-grained guidance}, as opposed to fine-grained guidance that requires AVs, has the potential to unlock large societal benefits in the near term.

However, due to the cost of field tests, it is impractical to provide coverage of all or even common field conditions.
A theoretical foundation is thus crucial, one that can provide assurances for leveraging human drivers to alter traffic flow. This article takes a step in this direction by formalizing aspects of coarse-grained guidance as \textit{piecewise-constant control}---also known as zero-order hold (ZOH) control---and theoretically analyzing its implications for traffic flow stability, as depicted in Fig.~\ref{fig:illustration}. While previous theoretical studies consider stability analysis for continuous AV control~\cite{cui2017stabilizing, zheng2020smoothing}, this work is the first to consider an integrated analysis that directly relates the real-time guidance provided to the human drivers to the system-level stability outcome. The theory accounts for the interaction between a single guided human-driven vehicle under coarse-grained guidance (guided HV) and the rest of the unguided human-driven vehicles (unguided HVs), governed by the Optimal Velocity Model (OVM)~\cite{bando1995dynamical}. Lyapunov functions and Lyapunov-Krasovskii functionals are leveraged to provide sufficient stability conditions for the traffic system under the piecewise-constant control with a hold length $\Delta$. Through numerical analysis, we demonstrate that the theoretical conditions closely match empirical simulations under a variety of OVM parameters, and hence the theory serves as a reliable certificate of the \textit{hold limit} for the system, which we define as the maximum hold length that guarantees system's stability. 

This article extends and subsumes~\cite{li2023ecc}. Our overall contributions are:
\begin{itemize}
    \item We cast coarse-grained guidance with piecewise-constant control into the sample-data system framework, and propose a theoretical framework with Lyapunov analyses to provide sufficient conditions for stabilizing the traffic system with a single piecewise-constant controlled vehicle. 
    \item We perform extensive numerical analysis to show that the theory closely match empirical simulations. Notably, the Lyapunov-Krasovskii functionals closely match the empirical hold limits in both trend and absolute value.
    \item Beyond~\cite{li2023ecc}, we (1) derive detailed insights into the relationship between OVM parameters and stability criteria, (2) design piecewise-constant controllers with longer hold limits, (3) discuss applications of the analyses to a broader class of coarse-grained guidance such as piecewise-constant velocity guidance in addition to acceleration guidance, and (4) expand the analyses to a broader class of human-compatible driving by incorporating magnitude error and human reaction delay.
\end{itemize}

\section{RELATED WORK}
\subsection{Traffic stabilization with autonomous vehicles}
There has been growing interest in controlling autonomous vehicles to stabilize mixed traffic systems of autonomous and human vehicles. A few works use reinforcement learning to design controls in various scenarios such as stabilizing stop-and-go waves in the low AV-adoption regime~\cite{wu2021flow}, coordinating AVs to exhibit traffic light behaviors~\cite{yan2022unified}, and designing eco-driving Lagrangian controls to reduce fuel consumption~\cite{guo2021hybrid}. Theoretical studies have been carried out on the linearized continuous system of the Optimal Velocity Model (OVM)~\cite{bando1995dynamical} for the ring-road traffic setting, under two main stability concepts 1) asymptotic stability~\cite{cui2017stabilizing, zheng2020smoothing, zheng2015stability, zhu2018analysis, wang2020controllability, mousavi2022synthesis}, and 2) string stability~\cite{swaroop1994string, bose2003analysis, rogge2008vehicle, giammarino2020traffic, liu2022structural}. Linear (asymptotic) stability analysis under disturbance, uncertainty, and reaction delay have also been proposed~\cite{mousavi2022synthesis, jin2016optimal}. While most of the analytic studies rely on linearizing the system,~\citet{gisolo2022nonlinear} introduces a stability analysis based on sector nonlinearity.
Our work follows the asymptotic stability concept, applying Lyapunov analysis to the linearized system. The aforementioned literature also derives continuous optimal controllers, with numerical simulations to demonstrate their effectiveness in stabilizing traffic flow. This body of work is foundational for our analysis. However, it is not directly applicable due to the continuous updates to the control action of the vehicles.

\vspace{\squeezeSection}
\subsection{Guiding human drivers to alter traffic flow}
The basic model we analyze most closely follows the simulation-based studies of~\citet{sridhar2021piecewise, sridhar2021learning}, which propose the class of piecewise-constant driving policies for guiding drivers to mitigate congestion by providing periodic instructions every $\Delta$ seconds. The study leverages deep reinforcement learning to synthesize controllers that are effective in stabilizing traffic flow in the ring road, including in the presence of lane changes. While a theoretical analysis is provided by~\citet{sridhar2021piecewise}, it does not consider interactions between the guided HV with other unguided HVs. In contrast to~\citet{sridhar2021piecewise}'s focus on average case velocity, we focus on worst case stability. Hence, the two works are not directly comparable. The piecewise-constant driving policies belong to a broader class of driving guidance that uses simple and easy-to-follow interventions to achieve desired traffic outcomes as discussed in Sec.~\ref{sec:introduction}. 

\subsection{Sample-data systems and Lyapunov stability analysis}
The class of piecewise-constant policies is conceptually similar to the zero-order hold sample-data systems~\cite{chen2012optimal}, where a continuous system is controlled by a digital holding device. The device takes a digital input every $\Delta$ seconds to produce a digital control being held constant for the entire holding period of length $\Delta$. In contrast to such systems, which typically are designed for hold lengths of milliseconds or less, we consider longer hold lengths of tens of seconds to respect human reaction times.

Lyapunov functions have been used to analyze general control systems with discontinuous feedback~\cite{clarke2010discontinuous}, of which our piecewise-constant coarse-grained guidance is a special case. To incorporate delays in human driver reaction times, Lyapunov-Krasovskii functionals have been used~\cite{li2014stabilization}, albeit in the context of human drivers issuing continuous controls based on delayed input states, and hence, the controlled system is still continuous. In contrast, our work considers piecewise-constant controls that are updated every $\Delta$ seconds, and hence belongs to the sample-data system paradigm. Prior studies~\cite{fridman2014tutorial, fridman2001new, liu2012wirtinger} adopt Lyapunov-Krasovskii functionals to general sample-data systems, and show that tailored Lyapunov-Krasovskii functionals outperforms general time-delay Lyapunov-Krasovskii functionals on toy sample-data control examples. Our work is the first to apply a sample-data Lyapunov-Krasovskii functional to analyze system-level stability of coarse-grained guidance, and validate through simulation that the theoretical guarantees closely align with simulated results.

\section{PRELIMINARIES.}
Following~\citet{zheng2020smoothing}, we consider a single-lane ring road with circumference $L$ and $n$ vehicles (see Fig.~\ref{fig:illustration}). Let the position of $i$-th vehicle be $p_i(t)$, the velocity be $v_i(t) = \dot{p}_i(t)$, the spacing be $s_i(t) = p_{i-1}(t) - p_i(t)$, and the acceleration be $a_i(t) = \dot{v}_i(t)$.

The standard car following model (CFM) for the unguided HVs takes the nonlinear form
\vspace{\squeeze}
\begin{equation}
    \dot{v}_i(t) = F(s_i(t), \dot{s}_i(t), v_i(t)),
\vspace{\squeeze}
\end{equation}
where the uniform flow equilibrium achieved at spacing $s^*$ and velocity $v^*$ such that $F(s^*, 0, v^*) = 0$.

Denote the error state as $\tilde{s}_i(t) = s_i(t) - s^*$ and $\tilde{v}_i(t) = v_i(t) - v^*$, the linearization of the CFM around the equilibrium is 
\begin{equation}
    \begin{cases}
    \dot{\tilde{s}}_i(t) & = \tilde{v}_{i-1}(t) - \tilde{v}_i(t)\\
    \dot{\tilde{v}}_i(t) & = a_1 \tilde{s}_i(t) - a_2 \tilde{v}_i(t) + a_3 \tilde{v}_{i-1}(t)
    \end{cases}
    \label{eq:hv_ode}
\end{equation}
where $a_1 = \frac{\partial F}{\partial s}, a_2 = \frac{\partial F}{\partial \dot{s}} - \frac{\partial F}{\partial v}, a_3 = \frac{\partial F}{\partial \dot{s}}$ evaluated at $(s^*, v^*)$.

The Optimal Velocity Model (OVM)~\cite{bando1995dynamical} follows the form
\begin{equation}
    F(s_i(t), \dot{s}_i(t), v_i(t)) = \alpha(V(s_i(t)) - v_i(t)) + \beta \dot{s}_i(t),
    \label{eq:ovm}
\end{equation}
where $\alpha > 0, \beta > 0$, and the optimal velocity $V(s_i(t))$ is
\vspace{\squeeze}
\begin{equation}
    V(s) = \begin{cases}
    0, & s \leq s_{st}\\
    f_v(s), & s_{st} < s < s_{go}\\
    v_{max}, & s \geq s_{go}
    \end{cases}    ,
    \label{eq:vclipping}
\vspace{\squeeze}
\end{equation}
and $f_v(s)$ typically takes the form
\vspace{\squeeze}
\begin{equation}
    f_v(s) = \frac{v_{max}}{2}\Big(1 - \cos \big(\pi \frac{s - s_{st}}{s_{go} - s_{st}}\big)\Big).
    \label{eq:ovm_f}
\vspace{\squeeze}
\end{equation}
As a result, $v^* = V(s^*), a_1 = \alpha \dot{V}(s^*), a_2 = \alpha + \beta, a_3 = \beta$.

\section{Coarse-grained Guidance Modeling}
We consider a system with one guided HV $i=1$ under piecewise-constant control with hold length $\Delta$, and $n-1$ unguided HVs under OVM. At a given time $t \in [t_k, t_{k+1}]$ where $[t_k, t_{k+1}]$ is the corresponding holding period, the CFM for the piecewise-constant controlled vehicle takes the form
\vspace{\squeeze}
\begin{equation}
    \dot{v}_1(t) = f\left(u\left(z(t_k)\right); s_1(t), \dot{s}_1(t), v_1(t)\right),
\vspace{\squeeze}
\end{equation}
where $z(t) = [s_1(t), v_1(t), ..., s_n(t), v_n(t)]$ denotes the state vector at any time $t$, the term $u(z(t_k))$ represents the piecewise-constant controller, and we allow a dynamics function $f$ which may additionally depend on the state of the guided HV $(s_1(t), \dot{s}_1(t), v_1(t))$. Examples of the CFM are provided next.

For a class of piecewise-constant velocity guidance, a constant desired velocity $u(z(t_k))$ is proposed to the guided HV during each holding period; the vehicle uses an OVM-like dynamics $f$ to reach the desired velocity, resulting in
\vspace{\squeeze}
\begin{equation}
    \dot{v}_1(t) = \alpha(u(z(t_k)) - v_1(t)) + \beta \dot{s}_1(t)
    \label{eq:velocity_guidance}.
\vspace{\squeeze}
\end{equation}
Meanwhile, for a class of piecewise-constant acceleration guidance, a constant acceleration is imposed during the holding period ($f$ is the identity function), resulting in
\vspace{\squeeze}
\begin{equation}
    \dot{v}_1(t) = u(z(t_k)).
    \label{eq:acceleration_control}
\vspace{\squeeze}
\end{equation}
We follow previous works~\cite{sridhar2021piecewise, sridhar2021learning} to focus on the piecewise-constant acceleration guidance in this work. 

Considering a full state feedback piecewise constant control $u(z(t_k)) = K z(t_k)$. Lumping the error state into a vector form with $x(t) = [\tilde{s}_1(t), \tilde{v}_1(t), ..., \tilde{s}_n(t), \tilde{v}_n(t)]^\intercal = z(t) - x^*$ where $x^* = [s^*, v^*, ..., s^*, v^*]$ is the equilibrium state, the error dynamics for the controlled vehicle is given by
\vspace{\squeeze}
\begin{equation}
    \begin{cases}
    \dot{\tilde{s}}_1(t)& = \tilde{v}_n(t) - \tilde{v}_1(t)\\
    \dot{\tilde{v}}_1(t)& = K x(t_k)
    \end{cases}    ,
    \label{eq:av_ode}
\vspace{\squeeze}
\end{equation}
and the error dynamics of the linearized piecewise-constant control system is thus given by
\vspace{\squeeze}
\begin{equation}
    \dot{x}(t) = A x(t) + A_1 x(t_k), k = 0, 1, ...
    \label{eq:matrix_system}
\vspace{\squeeze}
\end{equation}
with 
\vspace{\squeeze}
\begin{equation}
    A = \begin{bmatrix} 
        	C_1 & 0 & ... &  ... & 0 & C_2 \\
        	D_2 & D_1 & 0 & ... & ... & 0\\
        	0 & D_2 & D_1 & 0 & ... & 0 \\
        	\vdots & \ddots & \ddots & \ddots & \ddots & \vdots\\
        	0 & ... & 0 & D_2 & D_1 & 0 \\
        	0 & ... & ... & 0 & D_2 & D_1
    	\end{bmatrix},
    B = \begin{bmatrix} 
        	B_1 \\
        	B_2 \\ 
        	B_2 \\
        	\vdots \\
        	B_2
    	\end{bmatrix},
\end{equation}
\begin{equation}
\begin{aligned}
    \text{where} \quad \quad D_1 = \begin{bmatrix}
            0 & -1 \\
            a_1 & - a_2
          \end{bmatrix} &,
    D_2 = \begin{bmatrix}
            0 & 1 \\
            0 & a_3
          \end{bmatrix}, \\[5pt]
    C_1 = \begin{bmatrix}
            0 & -1 \\
            0 & 0
          \end{bmatrix},
    C_2 = \begin{bmatrix}
            0 & 1 \\
            0 & 0
          \end{bmatrix} & ,
    B_1 = \begin{bmatrix}
            0 \\
            1
          \end{bmatrix}, 
    B_2 = \begin{bmatrix}
            0 \\
            0
          \end{bmatrix}.
    \label{equ:notations}
\end{aligned}
\end{equation}
and $A_1 = - B K$ represents the full state feedback piecewise-constant control coefficients. 

The formulation directly extends~\citet{zheng2020smoothing} to consider piecewise-constant acceleration guidance. We remark that the formulation aligns with the sample-data system framework~\cite{chen2012optimal} with zero-order hold. We further note that different classes of piecewise-constant controls can be modeled similarly using different $A$ and $A_1$ matrices. For example, the velocity guidance (Eq.~(\ref{eq:velocity_guidance})) can be expressed with
\vspace{\squeeze}
\begin{equation}
    C_1 = \begin{bmatrix}
            0 & -1 \\
            0 & -a_2
          \end{bmatrix},
    C_2 = \begin{bmatrix}
            0 & 1 \\
            0 & a_3
          \end{bmatrix},
    B_1 = \begin{bmatrix}
            0 \\
            \alpha
          \end{bmatrix}, 
\vspace{\squeeze}
\end{equation}
and the remaining matrices unchanged. The matrices $C_1$ and $C_2$ follow the unguided HV representations $D_1$ and $D_2$, except the $a_1$ term representing the desired velocity is moved from $D_1$ in the uncontrolled system matrix to the $\alpha$ term in the control submatrix $B_1$. The Lyapunov analyses in Sec.~\ref{sec:lyapunov} naturally apply to the broader classes of piecewise-constant controls, as they are agnostic to the specific form of $A$ and $A_1$.  

\section{LYAPUNOV ANALYSIS}
\label{sec:lyapunov}
\subsection{A Lyapunov bound}
\label{sec:lyapunov1}
We first derive a lower bound on the hold limit using Lyapunov theory. While previous literature provide a Lyapunov bound on general nonlinear systems with discontinuous controls~\cite{clarke2010discontinuous}, we adapt and modify the derivation to general linearized systems where the controls are piecewise-constant. In later sections, we apply the bound to the specific linearized ring-road OVM to extract meaningful insights into the traffic system. Although our focus is on the ring-road, the following Lyapunov analyses are derived on general linearized system matrices $A$ and $A_1$, and thus can be readily applied to other traffic topologies, such as open-road (see~\cite{mousavi2022synthesis} for $A$ and $A_1$ specifications).
\begin{proposition} \label{prop:lyapunov}
Let there exist $n\times n$ matrices $P > 0, Q > 0$ such that $V(x) = x^\intercal P x > 0$ with $\dot{V}(x) = -x^\intercal Q x < 0$ and $-Q = (A + A_1)P + P(A + A_1)^\intercal$ is a valid Lyapunov function for the linear system with continuous full-state feedback control, $\dot{x}(t) = (A + A_1)x(t)$ where $A_1 = -BK$. Then the sample-data system with piecewise constant control (\ref{eq:matrix_system}) is asymp. stable for hold length 
\begin{equation}
    \Delta \leq c' \frac{\sigma_{\min}(Q)}{\sigma_{\max}(P) (\sigma_{\max}(A) + \sigma_{\max}(A_1))^2}, 
    \label{eq:lyapunov_theory}
\end{equation}
up to a scaling constant $c' > 0$, where $\sigma_{min}(\cdot)$ and $ \sigma_{max}(\cdot)$ are the minimum and maximum singular value of the corresponding matrix.\\
\end{proposition}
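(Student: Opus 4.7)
The plan is to view the sample-data system as a perturbation of the continuous closed-loop system and inherit its Lyapunov function. Rewriting the piecewise-constant dynamics (\ref{eq:matrix_system}) as
\[
\dot{x}(t) = (A+A_1)x(t) + A_1\bigl[x(t_k) - x(t)\bigr],
\]
the first term generates exactly the closed-loop system for which $V(x)=x^\intercal P x$ is a Lyapunov function with derivative $-x^\intercal Q x$ by hypothesis. Differentiating $V$ along the sample-data trajectory then yields
\[
\dot{V}(x(t)) = -x(t)^\intercal Q x(t) + 2\, x(t)^\intercal P A_1 \bigl[x(t_k) - x(t)\bigr],
\]
so the proof reduces to showing that for small enough $\Delta$, the cross term is dominated by the quadratic term from $Q$.

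To control the perturbation I would bound $\|x(t) - x(t_k)\|$ on the holding interval $[t_k, t_{k+1}]$ using the integrated dynamics, $x(t) - x(t_k) = \int_{t_k}^{t}\bigl[A x(s) + A_1 x(t_k)\bigr]\,ds$, giving
\[
\|x(t)-x(t_k)\| \leq \int_{t_k}^{t}\bigl[\sigma_{\max}(A)\|x(s)\| + \sigma_{\max}(A_1)\|x(t_k)\|\bigr]\,ds.
\]
A Gronwall-type argument, applied both forward to bound $\|x(s)\|$ by $\|x(t_k)\|$ and backward to bound $\|x(t_k)\|$ by $\|x(t)\|$, shows that all three norms are comparable up to a factor $1+O(\Delta)$ on the interval, and hence to leading order in $\Delta$,
\[
\|x(t)-x(t_k)\| \leq c'\,(\sigma_{\max}(A) + \sigma_{\max}(A_1))\,\Delta\,\|x(t)\|
\]
for an absolute constant $c'>0$. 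This is the main technical step; the exponential factors from Gronwall get absorbed into $c'$ as long as $\Delta$ remains below the bound ultimately being proved, which can be enforced by a self-consistent choice of the range of admissible $\Delta$.

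Plugging this estimate into the cross term via Cauchy-Schwarz, using $\|PA_1\| \leq \sigma_{\max}(P)\sigma_{\max}(A_1)$, and finally upper-bounding $\sigma_{\max}(A_1) \leq \sigma_{\max}(A)+\sigma_{\max}(A_1)$ to match the stated form, bounds the perturbation by $2c'\sigma_{\max}(P)(\sigma_{\max}(A)+\sigma_{\max}(A_1))^2\,\Delta\,\|x(t)\|^2$, while the $Q$-term is bounded below by $\sigma_{\min}(Q)\|x(t)\|^2$. Hence $\dot{V}(x(t)) < 0$ whenever $\Delta$ satisfies (\ref{eq:lyapunov_theory}), and asymptotic stability follows from the standard Lyapunov theorem applied to the absolutely continuous $V(x(t))$. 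The hard part is isolated in the Gronwall step: the rest of the argument is essentially bookkeeping that inherits the Lyapunov decrease from the continuous design and quantifies the ZOH perturbation.
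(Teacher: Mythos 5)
Your proof is correct and follows the same overall strategy as the paper's---inherit the Lyapunov function $V(x)=x^\intercal Px$ from the continuous closed-loop design and show the zero-order-hold perturbation is $O(\Delta)$ relative to the nominal decrease---but the decomposition you use is genuinely different in its details. The paper does not compute $\dot V$ pointwise; it applies the mean value theorem to $V(x(t))-V(x(t_k))$ at some intermediate $t^*$ and then splits the result into three terms anchored at the \emph{sampling instant}: the nominal decrease $\langle\nabla V(x(t_k)),\dot x(t_k)\rangle=-x(t_k)^\intercal Qx(t_k)$ plus two cross terms involving $\dot x(t^*)-\dot x(t_k)$ and $\nabla V(x(t^*))-\nabla V(x(t_k))$, each bounded by $\sigma_{\max}(P)$, $\sigma_{\max}(A)$, $\sigma_{\max}(A+A_1)$ and the same integrated-dynamics estimate on $\|x(t^*)-x(t_k)\|$ that you use. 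You instead rewrite the dynamics in input-delay form $\dot x=(A+A_1)x+A_1[x(t_k)-x(t)]$ and anchor everything at the \emph{current time}, which yields a single cross term and a cleaner bookkeeping. Your Gronwall step is the other substantive difference, and it buys you something: the paper's final estimate retains a factor $\|x(t_k)\|\max_{s\in[t_k,t_{k+1}]}\|x(s)\|$ rather than a pure quadratic form, and implicitly relies on these norms being comparable over the holding interval to conclude asymptotic stability; your explicit argument that $\|x(t)\|$, $\|x(s)\|$, and $\|x(t_k)\|$ agree up to $1+O(\Delta)$ makes that comparability precise, at the cost of the self-consistency caveat on the admissible range of $\Delta$ (harmless here, since the bound holds only up to the free constant $c'$, which you may shrink). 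Both routes land on the identical sufficient condition after applying Weyl's inequality to absorb $\sigma_{\max}(A+A_1)$ into $\sigma_{\max}(A)+\sigma_{\max}(A_1)$.
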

\vspace{-0.7cm}
\begin{proof} Consider a time period $[t_k, t_{k+1}]$ with $t_{k+1} - t_k \leq \Delta$. We use the Lyapunov function for the continuous system $V(x) = x^\intercal P x$, and show that it is a valid Lyapunov function for the sample-data system by showing $V(x(t)) - V(x(t_k))$ is sufficiently negative, i.e. $V(x(t))$ decreases as $t$ increases. We have for all $t \in [t_k, t_{k+1}]$:
\begin{equation}
\begin{aligned}
     & V(x(t)) - V(x(t_k)) \\
    = \; & \langle\nabla V(x(t^*)), \dot{x}(t^*)\rangle(t-t_k) \quad \text{for some }t^* \in (t_k, t)\\
    = \; & \langle\nabla V(x(t_k)), \dot{x}(t_k)\rangle(t-t_k) \\
    + \; &\langle\nabla V(x(t_k)), \dot{x}(t^*) - \dot{x}(t_k)\rangle(t-t_k)\\
    + \; & \langle\nabla V(x(t^*)) - \nabla V(x(t_k)), \dot{x}(t_k)\rangle(t-t_k),
    \label{eq:lyapunov_theory_block1}
\end{aligned}
\end{equation}
where the first equality holds by the mean value theorem. For the last equality,
the first term gives a decrease in Lyapunov value, as at time $t_k$ the system behaves the same as the continuous system with continuous control using the instantaneous state information $x(t_k)$. Specifically, $\langle\nabla V(x(t_k)), \dot{x}(t_k)\rangle  = x(t_k) ((A + A_1)P + P(A + A_1)^\intercal) x(t_k) = - x(t_k) Q x(t_k) \leq  -\sigma_{\min}(Q) \|x(t_k)\|_2^2 \; \leq \; 0$. The last two terms represent the perturbation incurred by the piecewise-constant control, where $\nabla V(x(t_k)) = 2 x(t_k)^\intercal P$, $\dot{x}(t^*) - \dot{x}(t_k) = A(x(t^*) - x(t_k))$, $\nabla V(x(t^*)) - \nabla V(x(t_k)) = 2(x(t^*) - x(t_k))^\intercal P$, and $\dot{x}(t_k) = (A + A_1)x(t_k)$. The following worst-case bounds hold:
\begin{equation*}
    \begin{aligned}
 \left\| \nabla V(x(t_k))\right\|_2 & \leq 2 \sigma_{max}(P) \left\|x(t_k)\right\|_2,\\
   \left\|\dot{x}(t_k)\right\|_2 & \leq \sigma_{\max}(A + A_1) \left\|x(t_k)\right\|_2,\\
    \left\|\nabla V(x(t^*)) - \nabla V(x(t_k))\right\|_2 &\leq 2\sigma_{max}(P) \left\|x(t^*) - x(t_k)\right\|_2,\\
\end{aligned}
\end{equation*}
\begin{equation}
\begin{aligned}
   \left\|x(t^*) - x(t_k)\right\|_2 & = \left\|\int_{t_k}^{t^*} \dot{x}(s) ds\right\|_2
   \leq (t^* - t_k) \max\limits_{s \in [t_k, t^*]} \left\|\dot{x}(s)\right\|_2 \\
   & \leq \Delta  \max\limits_{s \in [t_k, t^*]} \left\|A x(s) + A_1 x(t_k)\right\|_2\\
   & \leq \Delta (\sigma_{\max}(A) + \sigma_{\max}(A_1)) \max\limits_{s \in [t_k, t_{k+1}]} \left\|x(s)\right\|_2 .
   \label{eq:lyapunov_theory_block4}
\end{aligned}
\end{equation}
Taken together, we have
\begin{equation}
\begin{aligned}
    & \quad V(x(t)) - V(x(t_k)) \\
    & \leq (t-t_k) \big(- \sigma_{\min}(Q) \|x(t_k)\|_2^2 \\
    & \quad + 2 \sigma_{\max}(P)\|x(t_k)\|_2 \sigma_{\max}(A) \|x(t^*) - x(t_k)\|_2 \\
    & \quad + 2 \sigma_{\max}(P) \|x(t^*) - x(t_k)\|_2 \sigma_{\max}(A + A_1) \|x(t_k)\|_2\big) \\
    & = (t-t_k) \big(- \sigma_{\min}(Q) \|x(t_k)\|_2^2   + 2 \sigma_{\max}(P) \big(\sigma_{\max}(A) \\
    & \quad \quad \quad \quad \quad \quad \;\; + \sigma_{\max}(A + A_1)\big) \times \|x(t_k)\|_2 \|x(t^*) - x(t_k)\|_2\big)\\
    & \leq (t-t_k)\big(- \sigma_{\min}(Q) + c \Delta \cdot \sigma_{\max}(P) (\sigma_{\max}(A) + \sigma_{\max}(A_1))^2\big) \\ 
    & \quad \quad \|x(t_k)\|_2 \max\limits_{s \in [t_k, t_{k+1}]} \|x(s)\|_2,      \label{eq:lyapunov_theory_block5}
\end{aligned}
\vspace{\squeeze}
\end{equation}
where $c > 0$ is an appropriate constant. In the last inequality, we apply Weyl's inequality to separate $\sigma_{\max}(A + A_1) \leq \sigma_{\max}(A) + \sigma_{\max}(A_1)$ and substitute the bound on $\|x(t^*) - x(t_k)\|_2$ in Eq.~\eqref{eq:lyapunov_theory_block4} to obtain the square term $(\sigma_{\max}(A) + \sigma_{\max}(A_1))^2$. In order for $V(x(t)) - V(x(t_k))$ to have a sufficient decrease, e.g. for some $d > 1$ ($d = 2$ in~\citet{clarke2010discontinuous}), 
\begin{equation}
    V(x(t)) - V(x(t_k)) \leq -(t-t_k) \frac{\sigma_{\min}(Q)}{d} \|x(t_k)\|_2 \max\limits_{s \in [t_k, t_{k+1}]} \|x(s)\|_2,
    \label{eq:lyapunov_theory_block6}
\vspace{\squeeze}
\end{equation}
the following gives a sufficient condition
\begin{equation*}
    \begin{aligned}
        & c \Delta \cdot \sigma_{\max}(P) (\sigma_{\max}(A) + \sigma_{\max}(A_1))^2 \leq \frac{d-1}{d}\sigma_{\min}(Q) \\
        \Leftrightarrow \;\; & \Delta \; \leq c' \frac{\sigma_{\min}(Q)}{\sigma_{\max}(P) (\sigma_{\max}(A) + \sigma_{\max}(A_1))^2}, 
        \label{eq:lyapunov_theory_block7}
    \end{aligned}
\end{equation*}
for some $c' > 0$.
\end{proof}
While the above bound can be loose due to the worst case singular-value bounds, it still provides a way to qualitatively analyze the system. As an interpretation, let us suppose $P = I$ results in $Q > 0$. Then loosely speaking, an unstable uncontrolled system $A$ with larger $\sigma_{max}(A)$ makes the bound smaller. The contribution of the control is more complicated with a trade-off involved: on one hand, the larger control makes the continuous controlled system $A + A_1$ more stable, increasing the $\sigma_{\min}(Q)$ term in the numerator; on the other hand, it also increases $\sigma_{\max}(A_1)$ and hence increases the denominator. 

\subsection{A Lyapunov-Krasovskii functional}
\label{sec:lyapunov3}
With the key observation that the piecewise constant control system aligns perfectly with the sample-data system framework, we seek to find a tighter lower bound on the hold limit using theory developed for sample-data systems. Several studies~\cite{fridman2014tutorial, fridman2001new, liu2012wirtinger} view the sample-data system as a special case of the time-delay system with delay $\tau(t) = t - t_k$, which has a constant rate of change $\dot{\tau}(t) = 1$ for all $t$. Lyapunov-Krasovskii functionals are commonly used to analyze the performance of time-delay systems, and naturally extend to the sample-data system (\ref{eq:matrix_system}), which can be equivalently written in the form $\dot{x}(t) = (A + A_1) x(t) - A_1 \int_{t_k}^{t} \dot{x}(s) ds$, as $x(t_k) = x(t) - \int_{t_k}^{t} \dot{x}(s) ds$. \citet{fridman2014tutorial} proposes the following Lyapunov-Krasovskii functional for sample-data system 
\vspace{\squeeze}
\[V(t, x(t), \dot{x}(t)) = x^\intercal(t) P x(t) + (\Delta - \tau(t)) \int_{t - \tau(t)}^{t}\dot{x}^\intercal(s) U \dot{x}(s) ds , \vspace{\squeeze}\]
where $\tau(t) = t - t_k$, $P > 0$, $U > 0$. The first term $x^\intercal(t) P x(t)$ in the above functional is the regular Lyapunov function for the unperturbed nominal system $\dot{x}(t) = (A + A_1) x(t)$, whereas the second integral term handles the integral perturbation $- \int_{t_k}^{t} \dot{x}(s) ds$. Jensen's inequality, descriptor method~\cite{fridman2001new}, and state-augmentation with $\eta_1(t) = col\{x(t), \dot{x}(t), \frac{1}{\tau(t)}\int_{t-\tau(t)}^{t}\dot{x}(s)ds\}$ are applied to arrive at the following proposition on a given hold length $\Delta$ with Linear Matrix Inequalities (LMIs):
\begin{proposition} \label{prop:lk} Let there exist $n \times n$ matrices $P > 0, U > 0$; $P_2$ and $P_3$ such that the LMIs (\ref{eq:Lyapunov_Krasovskii}) are feasible. Then (\ref{eq:matrix_system}) is asymp. stable for all variable sampling instants $t_{k+1} - t_{k} \leq \Delta$.
\vspace{\squeeze}
\begin{equation}
\begin{aligned}
    & \begin{bmatrix}
    \Phi_{11} & P - P_2^\intercal + (A + A_1)^\intercal P_3 \\
    * & -P_3 - P_3^\intercal + \Delta U \\
    \end{bmatrix} < 0, \\
    & \begin{bmatrix}
    \Phi_{11} & P - P_2^\intercal + (A + A_1)^\intercal P_3 & -\Delta P_2^\intercal A_1\\
    * & -P_3 - P_3^\intercal & -\Delta P_3^\intercal A_1 \\
    * & * & -\Delta U \\
    \end{bmatrix} < 0.
    \label{eq:Lyapunov_Krasovskii}
\end{aligned}
\vspace{\squeeze}
\end{equation}
where $\Phi_{11} = P_2^\intercal (A + A_1) + (A + A_1)^\intercal P_2$ and $*$ denotes the symmetric elements of the symmetric matrix.
\end{proposition}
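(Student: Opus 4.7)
The plan is to verify that the Lyapunov--Krasovskii functional
\[V(t, x(t), \dot{x}(t)) = x^\intercal(t) P x(t) + (\Delta - \tau(t))\!\!\int_{t-\tau(t)}^{t}\!\!\dot{x}^\intercal(s) U \dot{x}(s)\, ds\]
with $\tau(t) = t - t_k$ decreases along trajectories of \eqref{eq:matrix_system} whenever the LMIs \eqref{eq:Lyapunov_Krasovskii} hold. First I would check that $V$ is admissible: the integral term is nonnegative and vanishes at each sampling instant ($\tau(t_k)=0$ and $(\Delta-\tau(t))\to 0$ as $t\to t_{k+1}^-$), so $V$ is continuous across sampling instants even though the functional itself is time-varying. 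Combined with $P>0$ giving $V \geq \sigma_{\min}(P)\|x(t)\|_2^2$, this provides the standard lower bound needed for asymptotic stability via Krasovskii's theorem.

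Next I would compute $\dot{V}$ on an interval $(t_k,t_{k+1})$, using $\dot{\tau}(t) = 1$:
\[\dot{V} = 2 x^\intercal(t) P \dot{x}(t) + (\Delta - \tau(t)) \dot{x}^\intercal(t) U \dot{x}(t) - \int_{t-\tau(t)}^{t}\dot{x}^\intercal(s) U \dot{x}(s)\, ds.\]
I would then apply Jensen's inequality to bound the integral term from below by $\tfrac{1}{\tau(t)} w(t)^\intercal U w(t)$, where $w(t) = \int_{t-\tau(t)}^{t}\dot{x}(s)\, ds = x(t) - x(t_k)$. Rewriting the dynamics in the equivalent form $\dot{x}(t) = (A+A_1)x(t) - A_1 w(t)$ justifies introducing $w(t)$ as a third state component alongside $x(t)$ and $\dot{x}(t)$.

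The heart of the argument is the descriptor method of~\citet{fridman2001new}: I would add the null term $2[x^\intercal(t) P_2^\intercal + \dot{x}^\intercal(t) P_3^\intercal]\bigl[(A+A_1)x(t) - A_1 w(t) - \dot{x}(t)\bigr] = 0$ with free slack matrices $P_2, P_3$. After collecting terms in the augmented vector $\eta_1(t) = \operatorname{col}\{x(t), \dot{x}(t), \tfrac{1}{\tau(t)} w(t)\}$, the bound $\dot{V} \leq \eta_1^\intercal(t)\, \Psi(\tau(t))\, \eta_1(t)$ emerges, where $\Psi(\tau)$ is affine in $\tau \in [0,\Delta]$. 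Since an affine matrix function is negative definite on an interval iff it is negative definite at both endpoints, it suffices to check $\Psi(0) < 0$ and $\Psi(\Delta) < 0$; these are exactly the two LMIs in \eqref{eq:Lyapunov_Krasovskii}, the second one having the Schur-complemented third block row/column that encodes the $-A_1 w(t)$ coupling scaled by $\Delta$.

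The main obstacle I anticipate is keeping the bookkeeping tight: correctly pushing the $\tau(t)$ dependence through Jensen's inequality so that the resulting matrix inequality is genuinely affine (not merely quadratic) in $\tau(t)$, and then verifying that the two LMIs stated in Proposition~\ref{prop:lk} correspond precisely to the $\tau=0$ and $\tau=\Delta$ endpoints after Schur complementing out the $w(t)$-block. Once convexity in $\tau$ is established and $\dot V < 0$ is obtained on each open sampling interval, continuity of $V$ at sampling instants upgrades this to a genuine decrease of $V(t)$ over the sampling grid, and hence to asymptotic stability uniformly over all sampling sequences with $t_{k+1}-t_k \leq \Delta$.
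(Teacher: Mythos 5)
Your proposal follows essentially the same route as the paper, which defers to Fridman's descriptor-method proof and sketches exactly this construction (the functional $V$, Jensen's inequality, the null descriptor term with slack matrices $P_2, P_3$, and the affine-in-$\tau$ endpoint argument yielding the two LMIs) in the text preceding the proposition and in the appendix proof of the magnitude-error extension. The only imprecision is that $V$ is continuous at sampling instants only when $t_{k+1}-t_k=\Delta$ exactly; for shorter intervals the nonnegative integral term is simply discarded at $t_{k+1}$, so $V$ does not increase at the jump, which is all the stability argument requires.
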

\begin{proof} 
See~\cite{fridman2001new}.
\end{proof}

While the previous Lyapunov analysis provides upper bounds on the perturbations by bounding the actions of the linear operators using the maximum singular values (e.g. $\sigma_{max}(A)$, $\sigma_{max}(A_1)$), the Lyapunov-Krasovskii bound solves for matrices $P$, $U$, $P_2$, $P_3$ to account for the interactions among $A$,  $A_1$, and $A + A_1$, and hence can yield a tighter bound.

Additionally, while the above proposition takes a fixed controller $K$ as given to verify if such a controller can stabilize the system with a hold length $\Delta$, we can optimize for the controller $K$ using the following corollary that takes the sample-data system property into consideration.
\begin{corollary} \label{cor:lk}
Let there exist $n \times n$ matrices $\bar{P} > 0$, $\bar{U} > 0$, $Q$ and an $n_u \times n$-matrix $L$ and a tuning parameter $\epsilon$ such that the LMIs (\ref{eq:Lyapunov_Krasovskii_control}) are feasible, where $\bar{\Phi}_{11} = Q^\intercal A^\intercal + A Q + BL + L^\intercal B^\intercal$. Then (\ref{eq:matrix_system}) is asymp. stable for all variable sampling instants $t_{k+1} - t_{k} \leq \Delta$ with the stabilizing gain given by $K = L Q^{-1}$.
\begin{equation*}
    \begin{bmatrix}
    \bar{\Phi}_{11} & \bar{P} - Q + \epsilon Q^\intercal A^\intercal + L^\intercal B^\intercal\\
    * & -\epsilon(Q + Q^\intercal) + \Delta \bar{U} \\
    \end{bmatrix} < 0, 
\end{equation*}
\begin{equation}
    \begin{bmatrix}
    \bar{\Phi}_{11} & \bar{P} - Q + \epsilon (Q^\intercal A^\intercal + L^\intercal B^\intercal) & -\Delta BL\\
    * & -\epsilon(Q + Q^\intercal) & -\Delta \epsilon BL \\
    * & * & -\Delta \bar{U} \\
    \end{bmatrix} < 0,
    \label{eq:Lyapunov_Krasovskii_control}
\end{equation}
\end{corollary}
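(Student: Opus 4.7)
The plan is to derive Corollary~\ref{cor:lk} from Proposition~\ref{prop:lk} by a congruence transformation paired with a linearizing change of variables. In the LMIs~(\ref{eq:Lyapunov_Krasovskii}), the controller enters only through $A_1 = -BK$, which multiplies the unknowns $P, U, P_2, P_3$ and creates bilinear couplings; the entire task is to eliminate these couplings so that a convex LMI solver can directly return $K$.

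First I would restrict degrees of freedom by setting $P_3 = \epsilon P_2$ for a scalar tuning parameter $\epsilon$, the standard descriptor-method parametrization used in sample-data Lyapunov--Krasovskii analysis; this defers the remaining scalar nonlinearity to an outer line search over $\epsilon$. Next I observe that feasibility of~(\ref{eq:Lyapunov_Krasovskii}) forces the $(2,2)$ block $-\epsilon(P_2 + P_2^\intercal) + \Delta U$ to be negative definite, which combined with $U > 0$ implies $P_2 + P_2^\intercal > 0$; hence $P_2$ is invertible, and I define $Q := P_2^{-1}$.

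I would then congruence-transform the first LMI by $\operatorname{diag}(Q^\intercal, Q^\intercal)$ (and its transpose on the right) and the second by $\operatorname{diag}(Q^\intercal, Q^\intercal, Q^\intercal)$, preserving negative definiteness since $Q$ is nonsingular. Introducing the change of variables $\bar{P} := Q^\intercal P Q$, $\bar{U} := Q^\intercal U Q$, and $L$ defined so that $A_1 Q = BL$ (yielding $K = L Q^{-1}$ under the paper's sign convention), each block rewrites in the new variables: the $(1,1)$ block becomes $\bar{\Phi}_{11} = AQ + Q^\intercal A^\intercal + BL + L^\intercal B^\intercal$; the $(1,2)$ block becomes $\bar{P} - Q + \epsilon(Q^\intercal A^\intercal + L^\intercal B^\intercal)$; the $(2,2)$ block becomes $-\epsilon(Q + Q^\intercal) + \Delta \bar{U}$; and the third block-row/column produces $-\Delta BL$, $-\Delta \epsilon BL$, $-\Delta \bar{U}$. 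This reproduces~(\ref{eq:Lyapunov_Krasovskii_control}) exactly. Asymptotic stability then follows by reconstructing $P = Q^{-\intercal} \bar{P} Q^{-1}$, $U = Q^{-\intercal} \bar{U} Q^{-1}$, $P_2 = Q^{-1}$, $P_3 = \epsilon Q^{-1}$ and invoking Proposition~\ref{prop:lk}.

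The hard part is the bookkeeping in the congruence step: every block of both LMIs must be tracked consistently while $-BK$ is re-expressed through $L$, and one must verify that $\epsilon$ is the only residual scalar bilinearity so that fixing $\epsilon$ leaves a genuine LMI in $(\bar{P}, \bar{U}, Q, L)$. A secondary subtlety is that the ansatz $P_3 = \epsilon P_2$ restricts Proposition~\ref{prop:lk}, so Corollary~\ref{cor:lk} yields only a sufficient (not necessary) synthesis condition---acceptable here because the corollary is aimed at controller design rather than the tightest possible hold-length certificate.
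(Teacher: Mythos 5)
Your proposal is correct and follows essentially the same route as the paper's proof: substitute $P_3 = \epsilon P_2$, set $Q = P_2^{-1}$, $\bar{P} = Q^\intercal P Q$, $\bar{U} = Q^\intercal U Q$, $L = KQ$, and relate the two sets of LMIs by congruence with $\mathrm{diag}\{Q^\intercal,\dots\}$ and $\mathrm{diag}\{Q,\dots\}$, then invoke Proposition~\ref{prop:lk}. You are in fact somewhat more careful than the paper, which omits the justification that $P_2$ (equivalently $Q$) is invertible and does not remark that the ansatz $P_3 = \epsilon P_2$ makes the synthesis condition only sufficient.
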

\begin{proof} 
From above and following~\cite{liu2012wirtinger}, we can perform full state-feedback controller design by substituting $P_3 = \epsilon P_2$ where $\epsilon$ is a tuning parameter, $Q = P_2^{-1}$, $\bar{P} = Q^\intercal P Q$, $\bar{U} = Q^\intercal U Q$ and $L = K Q$. Multiplying LMIs (\ref{eq:Lyapunov_Krasovskii_control}) by $diag\{Q^\intercal , ..., Q^\intercal\}$ and $diag\{Q, ..., Q\}$ from the left and right, we recover LMIs (\ref{eq:Lyapunov_Krasovskii}). \end{proof}
\vspace{\squeeze}
\vspace{\squeezeSection}
\section{Extensions to Human Errors}
In practice, human drivers may be unable to comply with guidance perfectly, leading to magnitude error and reaction delay. This section presents extensions of the theory to incorporate these additional human driving behaviors.
\vspace{\squeezeSection}
\subsection{Magnitude error}
When presented with an instruction, human drivers are prone to magnitude errors and may only be capable of following the instruction within a perturbed range. We modify the error dynamics of the linearized piecewise-constant control system in Eq.~(\ref{eq:matrix_system}) as follows
\vspace{\squeeze}
\begin{equation}
    \dot{x}(t) = A x(t) + A_1 x(t_k) + B_d d(t), \quad k = 0, 1, ...
    \label{eq:magnitude_error_system}
\vspace{\squeeze}
\end{equation}
where $B_d = diag(\{0, 1, ..., 0, 1\})$ due to perturbations in acceleration for both the guided and unguided HVs, and $d = [d_1(t), ..., d_n(t)]^\intercal \in \mathbb{R}^n$ is the perturbation vector. 
\begin{assumption}
We consider two scenarios 1) \textit{nonvanishing perturbation}: $\|d(t)\|_2 \leq \bar{d}_{nv}$ for some constant $\bar{d}_{nv} \geq 0$, which assumes the human driver incurs a nonvanishing error whose magnitude upper bound is independent of the system's error state; 2) \textit{vanishing perturbation}: $\|d(t)\|_2 \leq \bar{d}_{v}\|x(t)\|_2$ for some constant $\bar{d}_{v} \geq 0$, which assumes the upper bound of the human driver's magnitude error is proportional to the system's error state and diminishes as the error state approaches equilibrium.
\label{assump:nv_and_v}
\end{assumption}
\vspace{-0.5cm}
\subsubsection{Lyapunov Analysis}
\begin{proposition}\label{prop:magnitude_error_lyapunov}
Under the assumptions of  Prop.~\ref{prop:lyapunov} and nonvanishing perturbation, the system under magnitude error in Eq.~\eqref{eq:magnitude_error_system} eventually converges within a bounded region around the equilibrium (the ultimate bound) where
\vspace{\squeeze}
\begin{equation}
    \|x(t)\|_2 \leq \frac{10 \sigma_{max}(B_dP + P B^\intercal_d)}{1/d\sigma_{\min}(Q)}\bar{d}_{nv}
\vspace{\squeeze}
\end{equation}
for some $d > 1$, under the same hold length $\Delta \leq c' \frac{\sigma_{\min}(Q)}{\sigma_{\max}(P) (\sigma_{\max}(A) + \sigma_{\max}(A_1))^2}$ as the system without magnitude error in Prop.~\ref{prop:lyapunov}.

Assume vanishing perturbation, the system under magnitude error in Eq.~\eqref{eq:magnitude_error_system} is asymptotically stable for hold length
\vspace{\squeeze}
\begin{equation}
\begin{aligned}
    \Delta \; \leq c' \frac{\sigma_{\min}(Q) - 5 \sigma_{max}(B_dP + P B^\intercal_d)\bar{d}_v}{\sigma_{\max}(P) (\sigma_{\max}(A) + \sigma_{\max}(A_1))^2}
\end{aligned}
\vspace{\squeeze}
\end{equation}
\end{proposition}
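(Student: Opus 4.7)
The plan is to mirror the proof of Proposition~\ref{prop:lyapunov}, carrying the additional forcing term $B_d d(t)$ through each step. I would again set $V(x) = x^\intercal P x$ on a sampling interval $[t_k, t_{k+1}]$ and start from
\[
V(x(t)) - V(x(t_k)) = (t-t_k)\langle \nabla V(x(t^*)), \dot{x}(t^*)\rangle
\]
via the mean value theorem, decomposing the right-hand side into the nominal term $\langle \nabla V(x(t_k)), \dot{x}(t_k)\rangle(t-t_k)$ and two cross terms capturing how $\nabla V$ and $\dot{x}$ drift between $t_k$ and $t^*$, exactly as in Eq.~\eqref{eq:lyapunov_theory_block1}.

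The nominal term is the one that picks up the new contribution. Since $\dot{x}(t_k) = (A+A_1)x(t_k) + B_d d(t_k)$, the Lyapunov equation gives
\[
\langle \nabla V(x(t_k)), \dot{x}(t_k)\rangle = -x(t_k)^\intercal Q x(t_k) + 2 x(t_k)^\intercal P B_d d(t_k),
\]
and the cross-coupling $|2 x(t_k)^\intercal P B_d d(t_k)|$ is bounded by $\sigma_{\max}(B_d P + P B_d^\intercal)\|x(t_k)\|_2 \|d(t_k)\|_2$. The two cross terms from the expansion are handled as in Prop.~\ref{prop:lyapunov}, except that $x(t^*)-x(t_k) = \int_{t_k}^{t^*}(A x(s) + A_1 x(t_k) + B_d d(s))\,ds$ now acquires a $\|B_d\|_2\max_s\|d(s)\|_2\,\Delta$ contribution, which propagates through the $\dot{x}(t^*)-\dot{x}(t_k)$ and $\nabla V(x(t^*))-\nabla V(x(t_k))$ bounds in Eq.~\eqref{eq:lyapunov_theory_block4} to produce extra terms proportional to $\Delta\bar{d}_{nv}$ (or $\Delta\bar{d}_v\|x\|_2$ under the vanishing assumption).

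I would then split into the two cases. In the vanishing regime $\|d(t)\|_2 \le \bar{d}_v\|x(t)\|_2$, every $d$-induced term scales quadratically in $\|x\|_2$ and can be absorbed into the $-\sigma_{\min}(Q)\|x(t_k)\|_2^2$ term; the dominant contribution comes from the nominal cross-coupling and produces an effective decay rate $\sigma_{\min}(Q) - 5\sigma_{\max}(B_d P + P B_d^\intercal)\bar{d}_v$, while the $\Delta\bar{d}_v$ terms are higher order and tucked into the constant $c'$. Copying the sufficient-decrease step of Prop.~\ref{prop:lyapunov} with this reduced $\sigma_{\min}(Q)$ delivers the claimed hold length. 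In the nonvanishing regime, the $d$-terms contribute \emph{linearly} in $\|x(t_k)\|_2$ rather than quadratically, yielding
\[
V(x(t)) - V(x(t_k)) \le (t-t_k)\bigl(-\tfrac{1}{d}\sigma_{\min}(Q)\|x(t_k)\|_2^2 + 10\,\sigma_{\max}(B_d P + P B_d^\intercal)\bar{d}_{nv}\|x(t_k)\|_2\bigr)
\]
under the same $\Delta$-bound as Prop.~\ref{prop:lyapunov}; this is negative precisely when $\|x(t_k)\|_2$ exceeds the stated ultimate bound, which gives practical stability.

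The main obstacle is bookkeeping: correctly collecting every appearance of $B_d d$ across the three expansion terms and showing that the $\Delta$-dependent $d$-contributions can be absorbed into $c'$ without spoiling either the vanishing-case hold length or the nonvanishing-case ultimate bound. In particular, verifying that the specific constants ``$5$'' and ``$10$'' suffice once all cross terms (including the $d$-dependent part of $x(t^*)-x(t_k)$) are tallied requires care, but the structure is forced by the same Weyl/singular-value chain used in Prop.~\ref{prop:lyapunov}.
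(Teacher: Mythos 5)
Your proposal takes essentially the same route as the paper's proof: carry the forcing term $B_d d$ through the three-term expansion of Prop.~\ref{prop:lyapunov}, bound the resulting perturbation by $\sigma_{\max}(B_dP + PB_d^\intercal)$ times products of $\|x\|_2$ and $\|d\|_2$, then absorb it into the decay rate in the vanishing case and run an ultimate-bound argument in the nonvanishing case. The only point where your sketch drifts from the paper is the origin of the constants: the paper obtains the $5$ by a crude triangle-inequality count of five terms in $P_m$ (using $\|x(t^*)-x(t_k)\|_2 \leq \|x(t^*)\|_2+\|x(t_k)\|_2$ rather than a $\Delta$-dependent integral bound), and the $10$ arises as $2\times 5$ from the standard ultimate-bound step with $\theta=\tfrac12$, which demands a strict decrease $-\tfrac12 N\|x(t_k)\|_2$ outside the ball rather than the mere nonpositivity your final inequality provides.
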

\begin{proof} The proof follows that of Prop.~\ref{prop:lyapunov}, with additional perturbation terms in Eq.~\eqref{eq:lyapunov_theory_block1} to account for the magnitude error. These perturbation terms are similarly bounded by maximum singular value bounds as in the proof of Prop.~\ref{prop:lyapunov}. See Appendix A1 in the extended article for details.
\end{proof}

\vspace{\squeezeSection}
\subsubsection{Lyapunov-Krasovskii Analysis}

We can extend the Lyapunov Krasovskii analysis with $H_\infty$ robust control to accommodate magnitude error in the system, following a similar derivation in a previous work~\cite{li2014stabilization}. 
\begin{definition} A system is robust at $H_\infty$ disturbance attenuation level $\gamma > 0$ if $\left(\int_{0}^{\infty} x(t)^\intercal x(t)\right)^{1/2} \leq  \gamma \left(\int_{0}^{\infty} d(t)^\intercal d(t)\right)^{1/2}$.
\label{def:Hinf_robust}
\end{definition}
\begin{proposition} \label{prop:magnitude_error_lk} Under the assumptions in Prop.~\ref{prop:lk}, the system under magnitude error~(\ref{eq:magnitude_error_system}) is asymptotically stable at a $H_\infty$ disturbance attenuation level $\gamma > 0$ for all varying sampling instants $t_{k+1} - t_k \leq \Delta$ if the following LMIs are feasible.
\begin{equation}
\begin{aligned}
    & \begin{bmatrix}
    \Phi_{11} + I & P - P_2^\intercal + (A + A_1)^\intercal P_3 & P_2^\intercal B_d \\
    * & -P_3 - P_3^\intercal + \Delta U & P_3^\intercal B_d \\
    * & * & -\gamma^2 I
    \end{bmatrix} < 0, \\
    & \begin{bmatrix}
    \Phi_{11}  + I & P - P_2^\intercal + (A + A_1)^\intercal P_3 & -\Delta P_2^\intercal A_1 & P_2^\intercal B_d\\
    * & -P_3 - P_3^\intercal & -\Delta P_3^\intercal A_1 & P_3^\intercal B_d\\
    * & * & -\Delta U  & 0\\
    * & * & 0 & -\gamma^2 I
    \end{bmatrix} < 0.
    \label{eq:magnitude_error_Lyapunov_Krasovskii}
\end{aligned}
\end{equation}
\end{proposition}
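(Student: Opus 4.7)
The plan is to mimic the derivation of Proposition~\ref{prop:lk}, reusing the same Lyapunov--Krasovskii functional
\[V(t, x(t), \dot{x}(t)) = x^\intercal(t) P x(t) + (\Delta - \tau(t)) \int_{t - \tau(t)}^{t}\dot{x}^\intercal(s) U \dot{x}(s)\, ds,\]
but augmenting the differential inequality with the standard $H_\infty$ performance term $x^\intercal x - \gamma^2 d^\intercal d$. Concretely, I would aim to establish
\[\dot V(t, x(t), \dot{x}(t)) + x(t)^\intercal x(t) - \gamma^2 d(t)^\intercal d(t) < 0\]
for all admissible $d(t)$ and all $t \geq 0$. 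Integrating this inequality from $0$ to $\infty$ under zero initial condition (the usual $H_\infty$ setup) immediately yields $\int_0^\infty x^\intercal x\, dt \le \gamma^2 \int_0^\infty d^\intercal d\, dt$, which is precisely Definition~\ref{def:Hinf_robust}; taking $d \equiv 0$ recovers asymptotic stability via Proposition~\ref{prop:lk}.

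Next I would rewrite the perturbed sample-data system in the delay form $\dot{x}(t) = (A + A_1) x(t) - A_1 \int_{t_k}^{t} \dot{x}(s)\,ds + B_d d(t)$, and apply the descriptor method by adding the null term $2\bigl[x^\intercal P_2^\intercal + \dot{x}^\intercal P_3^\intercal\bigr] \cdot \bigl[(A + A_1)x - A_1 \!\int_{t_k}^{t}\!\dot x(s)\,ds + B_d d - \dot{x}\bigr] = 0$ to $\dot V$. This introduces precisely the cross-coupling blocks $P_2^\intercal B_d$ and $P_3^\intercal B_d$ that appear in the LMIs. I would then bound the integral cross terms via Jensen's inequality and the state augmentation $\eta_1(t) = \operatorname{col}\{x(t), \dot{x}(t), \tfrac{1}{\tau(t)}\!\int_{t-\tau(t)}^{t}\!\dot x(s)\,ds\}$, exactly as in Proposition~\ref{prop:lk}, and finally group everything as a quadratic form in the enlarged vector $[\eta_1^\intercal, d^\intercal]^\intercal$. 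The additive $x^\intercal I x$ in the performance term lands on the $(1,1)$ block, producing $\Phi_{11} + I$, while the $-\gamma^2 d^\intercal d$ term supplies the trailing $-\gamma^2 I$ diagonal block; a Schur-complement finish (together with the convex combination argument in $\tau(t) \in [0, \Delta]$) splits the single negativity requirement into the two boundary LMIs at $\tau(t) = 0$ and $\tau(t) = \Delta$ shown in~\eqref{eq:magnitude_error_Lyapunov_Krasovskii}.

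The main obstacle, as I see it, is not any single calculation but the careful bookkeeping that the disturbance couples into both the descriptor identity and the $H_\infty$ performance functional without inducing extra cross terms that break the LMI block structure. In particular, one must verify that the $B_d$ columns appear identically in both LMIs (rather than only in the $\tau(t) = \Delta$ variant) and that no $B_d$--$A_1$ coupling block is needed, because $d$ enters directly through the current-time descriptor injection rather than through the integral remainder. Once this is confirmed, the rest of the derivation is a transcription of the argument from~\cite{li2014stabilization} (which treats the continuous time-delay $H_\infty$ case) into the sample-data template of Proposition~\ref{prop:lk}. I would therefore delegate the detailed matrix algebra to the appendix and in the body highlight only the two modifications: (i) the descriptor-method identity now carries a $B_d d$ term, and (ii) the performance criterion adds $x^\intercal x - \gamma^2 d^\intercal d$ to $\dot V$.
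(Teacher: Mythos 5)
Your proposal matches the paper's own proof in all essentials: the same sample-data Lyapunov--Krasovskii functional, the descriptor-method null term augmented with $B_d d(t)$, Jensen's inequality with the augmented vector $\operatorname{col}\{x,\dot x, v_1, d\}$, the $H_\infty$ performance term $x^\intercal x - \gamma^2 d^\intercal d$ added to $\dot V$ and integrated under zero initial conditions, and the affine-in-$\tau(t)$ argument evaluated at the endpoints $\tau\to 0$ and $\tau\to\Delta$ to produce the two LMIs. The only cosmetic difference is that the paper obtains the first LMI by letting $\tau\to 0$ collapse the $v_1$ row and column directly rather than via a Schur-complement step, but this does not change the argument.
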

\begin{proof} In the presence of the magnitude error~(\ref{eq:magnitude_error_system}), we can first employ a similar derivation as in~\citet{fridman2014tutorial} to derive LMIs akin to Eq.~\eqref{eq:Lyapunov_Krasovskii}, with an expanded augmented state $\eta_2 = col\{x(t), \dot{x}(t), v_1, d(t)\}$ that incorporates the magnitude error. Then, we follow~\citet{li2014stabilization} to modify the LMIs for $H_{\infty}$ robustness and conclude with the proposition. See Appendix A3 in the extended article for details. 
\vspace{\squeezeSection}
\end{proof}
\vspace{0.1cm}
\subsection{Reaction delay}
Human drivers are subject to reaction delay upon receiving instructions. For each $k = 0, 1, ...$, we consider a delayed time period $[t_k + \sigma(t_k), \;t_{k+1} + \sigma(t_{k+1})] :\triangleq [t_{k'}, t_{k'+1}]$ where $0 \leq \sigma(t) \leq \Sigma$ is the reaction delay from the human driver. The dynamics within the delayed time period is 
\vspace{\squeeze}
\begin{equation}
    \dot{x}(t) = A x(t) + A_1 x(t_k),  \label{eq:reaction_delay_system}
\vspace{\squeeze}
\end{equation}
where the full state feedback control $A_1 x(t_k) = - BK x(t_k)$ is determined based on the error state $x(t_k)$ at a time $t_k$ prior to the start time $t_{k'}$ of the delayed time period.
\vspace{0.07cm}
\subsubsection{Lyapunov Analysis}

\begin{proposition} \label{prop:reaction_delay_lyapunov} Under the assumptions in Prop.~\ref{prop:lyapunov} and the assumption where for some constant $\bar{D}_v \geq 0$, $\|\dot{x} (s)\|_2 \leq \bar{D}_{v} \max\limits_{s \in [t_{k'}, t_{k'+1}]}\|\dot{x}(t_{s})\|_2 \quad \forall s \in [t_{k}, t_{k'}]$, the system under reaction delay~(\ref{eq:reaction_delay_system}) is asymptotically stable for hold length 
\begin{equation}
    \Delta \; \leq c' \frac{\sigma_{\min}(Q) }{\sigma_{\max}(P) (\sigma_{\max}(A) + \sigma_{\max}(A_1))^2} - c'' \bar{D}_v \Sigma
\end{equation}
up to scaling constants $c' > 0, c'' > 0$.
\end{proposition}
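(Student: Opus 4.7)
The plan is to adapt the proof of Prop.~\ref{prop:lyapunov} to account for one additional complication: the control action applied during the delayed window $[t_{k'}, t_{k'+1}]$ is based on the stale measurement $x(t_k)$ taken up to $\Sigma$ seconds earlier, rather than on $x(t_{k'})$. I would again take $V(x) = x^\intercal P x$ and examine $V(x(t)) - V(x(t_{k'}))$ for $t \in [t_{k'}, t_{k'+1}]$ via the mean value theorem, following the decomposition in Eq.~\eqref{eq:lyapunov_theory_block1} with $t_{k'}$ playing the role of $t_k$.

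The key modification is to split
\[\dot{x}(t_{k'}) = (A + A_1) x(t_{k'}) + A_1 \bigl(x(t_k) - x(t_{k'})\bigr),\]
separating the nominal continuous-control dynamics from a new reaction-delay perturbation. The nominal piece contributes the usual stabilizing decrease $-x(t_{k'})^\intercal Q x(t_{k'}) \leq -\sigma_{\min}(Q)\|x(t_{k'})\|_2^2$, while the delay piece must be bounded by $2\sigma_{\max}(P)\sigma_{\max}(A_1)\|x(t_{k'})\|_2\|x(t_k) - x(t_{k'})\|_2$. Using the given assumption on $\|\dot{x}(s)\|_2$ for $s \in [t_k, t_{k'}]$ together with the standard bound $\|\dot{x}(s)\|_2 \leq (\sigma_{\max}(A) + \sigma_{\max}(A_1))\max_r \|x(r)\|_2$ from Eq.~\eqref{eq:lyapunov_theory_block4}, one obtains
\[\|x(t_k) - x(t_{k'})\|_2 \leq \Sigma\, \bar{D}_v\, (\sigma_{\max}(A) + \sigma_{\max}(A_1)) \max_{s \in [t_{k'}, t_{k'+1}]}\|x(s)\|_2.\]
The remaining perturbation terms inside $[t_{k'}, t_{k'+1}]$ are handled exactly as in Prop.~\ref{prop:lyapunov}, contributing the familiar $c\,\Delta\,\sigma_{\max}(P)(\sigma_{\max}(A) + \sigma_{\max}(A_1))^2$ factor.

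Combining the continuous-control decrease, the intra-hold perturbations, and the new reaction-delay perturbation, and imposing the sufficient-decrease condition of Eq.~\eqref{eq:lyapunov_theory_block6}, yields
\[c\,\Delta\,\sigma_{\max}(P)(\sigma_{\max}(A) + \sigma_{\max}(A_1))^2 + \tilde{c}\,\bar{D}_v\,\Sigma\,\sigma_{\max}(P)\sigma_{\max}(A_1)(\sigma_{\max}(A) + \sigma_{\max}(A_1)) \leq \tfrac{d-1}{d}\sigma_{\min}(Q),\]
which after rearrangement and absorbing the matrix-dependent factors into $c'$ and $c''$ gives the stated bound. The main obstacle is careful bookkeeping: the proof must track two distinct state-gap quantities---the intra-hold gap $x(t^*) - x(t_k)$ inherited from Prop.~\ref{prop:lyapunov} and the delay gap $x(t_k) - x(t_{k'})$---and ensure that the $\bar{D}_v\Sigma$ contribution enters as a clean subtractive offset rather than as a term coupled multiplicatively with $\Delta$. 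Additional care is required because the $\max$ in the delay bound ranges over the \emph{delayed} interval $[t_{k'}, t_{k'+1}]$, so one must verify that the iterated maxima consolidate into the single factor $\|x(t_{k'})\|_2 \max_{s \in [t_{k'}, t_{k'+1}]}\|x(s)\|_2$ appearing on the right-hand side of Eq.~\eqref{eq:lyapunov_theory_block6}, preserving the desired affine dependence on $\Delta$.
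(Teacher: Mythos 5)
Your proposal is correct and follows essentially the same route as the paper's proof: the paper likewise writes the Lyapunov decrement over the delayed interval as the Prop.~\ref{prop:lyapunov} terms (with $t_{k'}$ in place of $t_k$) plus the extra delay perturbation $\langle 2x(\cdot)^\intercal P,\, A_1(x(t_k)-x(t_{k'}))\rangle$, bounds $\|x(t_{k'})-x(t_k)\|_2 \leq \Sigma\,\bar{D}_v\,\sigma_{\max}(A+A_1)\max_{s\in[t_{k'},t_{k'+1}]}\|x(s)\|_2$ via the stated assumption, and imposes the same sufficient-decrease condition. The only cosmetic difference is that the paper further relaxes your $\sigma_{\max}(A_1)(\sigma_{\max}(A)+\sigma_{\max}(A_1))$ factor up to $(\sigma_{\max}(A)+\sigma_{\max}(A_1))^2$ so that the common factor divides out cleanly and $c''$ is a pure constant.
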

\begin{proof} See Appendix A2 in the extended article for details. The proof follows that of Prop.~\ref{prop:lyapunov}, with an additional perturbation term in Eq.~\eqref{eq:lyapunov_theory_block1} to account for reaction delay. 
\end{proof}
Notably, as the proof treats the reaction delay period $[t_k, t_k + \sigma(t_k)]$ as adding noise to the system and bounds it by a maximum singular value bound, the analysis naturally applies to a broader class of human driving behaviors: for example, instead of instantly switching from the control $A_1 x(t_{k-1})$ to $A_1 x(t_k)$, the driver may transition smoothly in between, and the transition period can be considered as a delay period.

\subsubsection{Lyapunov-Krasovskii Analysis}
Similarly, we can adapt the original proof~\cite{fridman2014tutorial} of Prop.~\ref{prop:lk} to incorporate reaction delay by modifying the Lyapunov-Krasovskii functional as: 
\vspace{\squeeze}
\begin{equation}
    \resizebox{.49\textwidth}{!}{
    $V(t, x(t), \dot{x}(t)) = x^\intercal(t) P x(t) + (\Delta + \Sigma - \tau(t)) \int_{t - \tau(t)}^{t}\dot{x}^\intercal(s) U \dot{x}(s) ds, $
    }
\end{equation}
where $\Sigma$ is an upper bound on the reaction delay. Instead of a direction extension, we may further obtain a tighter bound on the hold limit by combining the above sample-data Lyapunov-Krasovskii functional with a time-delay Lyapunov-Krasovskii functional such as the one presented in~\citet{li2014stabilization}. We leave this as a future work.

\color{black}
\section{Numerical analysis}
\label{sec:experiment}
In this section, we compare the Lyapunov analysis and the Lyapunov-Krasovskii analysis with the hold limit from empirical simulation. We aim to answer the following questions:
\begin{enumerate}
    \item How well does the theory match simulation? To what extent do \textit{simplified} theoretical analyses explain integrated traffic flow stability under coarse-grained guidance?
    \item What relationships emerge from the problem parameters and how do they affect stability?
    \item Can we derive better piecewise-constant controllers using the Lyapunov or Lyapunov-Krasovskii analysis?
\end{enumerate}

\vspace{-0.2cm}
\subsection{Experimental Setup and Results on the default parameters}
\label{sec:experiment_setup}
We adopt the implementation from~\citet{zheng2020smoothing} in Python and extend it to the piecewise-constant control setting. A summary of all parameters and their default values is listed in Table~\ref{tab:params}. Vehicles are initialized by a uniform perturbation around the equilibrium, with the $i^{th}$ vehicle's position and velocity $(x^i_0, v^i_0) = (is^* + \delta_s, v^* + \delta_v)$ where $\delta_s \sim Unif[-7.5, 7.5], \delta_v \sim Unif[-4.5, 4.5]$, and $v^* = V(s^*)$ from Eq.~\eqref{eq:vclipping} is the equilibrium velocity corresponding to the equilibrium spacing $s^* = L / n$. By default, we apply the same $\mathcal{H}_2$ optimal full state-feedback controller for the continuous system to the sample-data system by holding it piecewise-constant. The controller 
\vspace{\squeeze}
\begin{equation}
    u(t) = -K x(t),
\vspace{\squeeze}
\end{equation}
where $K \in \mathbb{R}^{1\times 2n}$, can be obtained by the following convex program with $K = ZX^{-1}$:
\begin{equation}
\begin{aligned}
   \min\limits_{X, Y, Z} \quad &  \text{Trace}(QX) + \text{Trace}(RY) \label{eq:H2_obj}\\
   \text{subject to} \quad & (AX - BZ) + (AX - BZ)^\intercal +H H ^\intercal \preccurlyeq 0,\\
   & \begin{bmatrix}
        Y & Z\\
        Z^\intercal & X
     \end{bmatrix} \succcurlyeq 0, X \succ 0.
\end{aligned}
\end{equation} 
\begin{equation}
    \text{where } \quad Q^{\frac{1}{2}} = \text{diag}(\gamma_s, \gamma_v, ..., \gamma_s, \gamma_v), \; R^{\frac{1}{2}} = \gamma_u, \; H = I \label{eq:H2_param}
\end{equation}
with the default $\gamma_s = 0.03, \gamma_v = 0.15, \gamma_u = 1$, corresponding to the performance state $ z(t) = \begin{bmatrix}Q^{\frac{1}{2}}\\ 0\end{bmatrix}x(t) + \begin{bmatrix}0 \\ R^{\frac{1}{2}}\end{bmatrix}u(t)$.

\begin{table*}[!ht]
\centering
\begin{tabular}{ |c|c|c| } 
 \hline
 \textbf{Symbol} & \textbf{Default} & \textbf{Description} \\ 
 \hline
 \multicolumn{3}{|c|}{\textbf{\textit{System parameters}}} \\
 \hline
 $L$ & $400m$ & Circumference of the ring-road, where the equilibrium spacing $s^* = L / n$ \\ 
 \hline
 $n$ & $20$ & Number of vehicles in the ring-road system, where the equilibrium spacing $s^* = L / n$ \\
 \hline
 $s_{st}$ & $5m$ & Small spacing threshold such that the optimal velocity $= 0$ below the threshold, see Eq.~\eqref{eq:vclipping} \\
 \hline
 $s_{go}$ & $35m$ & Large spacing threshold such that the optimal velocity $= v_{max}$ above the threshold, see Eq.~(\ref{eq:vclipping}) \\
 \hline
  $v_{max}$ & $30m/s$ & Maximum optimal velocity, see Eq.~\eqref{eq:vclipping} and (\ref{eq:ovm_f}) \\
 \hline
 $\alpha$ & $0.6$ & Driver's sensitivity to the difference between the current velocity and the desired spacing-dependent optimal velocity, see Eq.~\eqref{eq:ovm}\\
 \hline
 $\beta$ & $0.9$ & Driver's sensitivity to the difference between the velocities of the ego vehicle and the preceding vehicle, see Eq.~\eqref{eq:ovm} \\
 \hline
 \multicolumn{3}{|c|}{\textbf{\textit{Control parameters}}}\\
 \hline
 $k_{mult}$ & $1$ & Scale the $\mathcal{H}_2$ optimal controller $K_{cont}$ by a constant: $K_{new} = k_{mult} \cdot K_{cont}$\\
 \hline
 $\gamma_s$ & $0.03$ & weight on the position derivation from equilibrium in the $\mathcal{H}_2$ optimal control objective, see Eq.~\eqref{eq:H2_obj} and Eq.~\eqref{eq:H2_param}\\
 \hline
 $\gamma_v$ & $0.15$ & weight on the velocity derivation from equilibrium in the $\mathcal{H}_2$ optimal control objective, see Eq.~\eqref{eq:H2_obj} and Eq.~\eqref{eq:H2_param} \\
 \hline
 $\gamma_u$ & $1$ & weight on the control magnitude in the $\mathcal{H}_2$ optimal control objective, see Eq.~\eqref{eq:H2_obj} and Eq.~\eqref{eq:H2_param} \\
 \hline
\end{tabular}
 \caption{System and control parameters for the traffic system based on the Optimal Velocity Model.}
 \label{tab:params}
 \vspace{-0.3cm}
\end{table*}

\textbf{Simulation stability criteria:} \label{sec:setup_stability_criteria}
We simulate the system by integrating the ordinary differential equation (\ref{eq:matrix_system}) using the forward Euler method, with a discretization of $T_{step}=0.01s$. We say a system (uncontrolled, or with continuous / piecewise-constant control) is \textit{stable in simulation} if (1) $50$ simulated trajectories from different initial perturbations all converge to the equilibrium within $TotalTime = 300s$, and (2) no vehicle collides within the trajectory (given by negative spacings). To mitigate collisions, we follow~\citeauthor{zheng2020smoothing} to equip all vehicles with a standard automatic emergency braking system $\dot{v}(t) = a_{min}, \text{ if } \frac{v_i^2(t) - v_{i-1}^2(t)}{2(s_i(t) - s_d)} \geq |a_{min}|$, where $a_{min} = -5m/s^2$ is the maximum deceleration rate of each vehicle, and $s_d = 0.5m$ is the safe distance. 

The simulations empirically identify stable hold times without collisions. We denote the longest of these hold times for each parameter setting as the simulation hold limit, which is an approximation to the true hold limit without collision. Notably, as the theoretical analysis in this work focuses on certifying asymptotic stability (i.e. convergence of the trajectories to equilibrium), the analysis do not model the collision constraints and hence may overestimate the simulation hold limit when the system converges yet collisions occur. A future work involves combining Lyapunov-Krasovskii LMIs with control barrier functions~\cite{ames2016control} to explicitly model collisions and ensure both asymptotic stability and safety.

Moreover, rather than focusing on the actual values of the hold limits, we utilize the hold limits for comparative analysis, where we compare parameter settings and identify regimes in which system and control parameters enable longer hold limits.

\begin{figure}[ht!]
    \centering
    \subfigure[]{\includegraphics[width=0.23\textwidth]{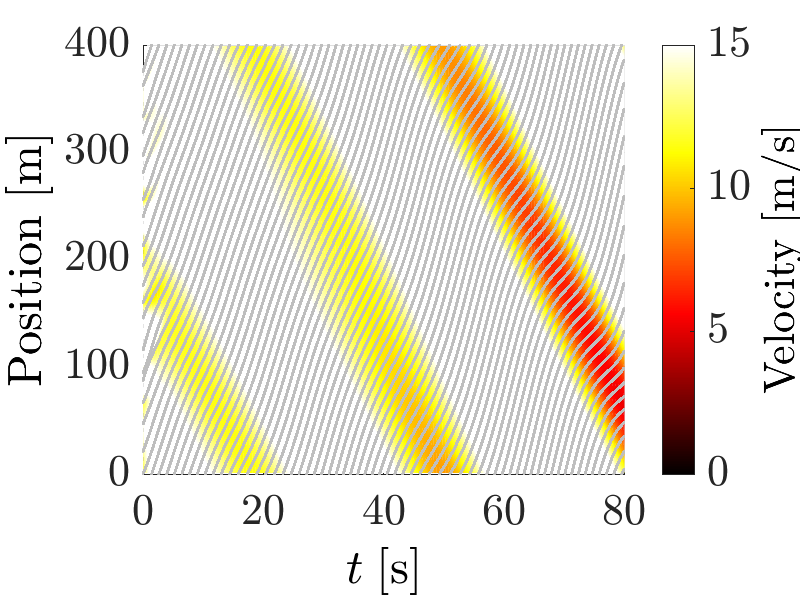}} 
    \subfigure[]{\includegraphics[width=0.23\textwidth]{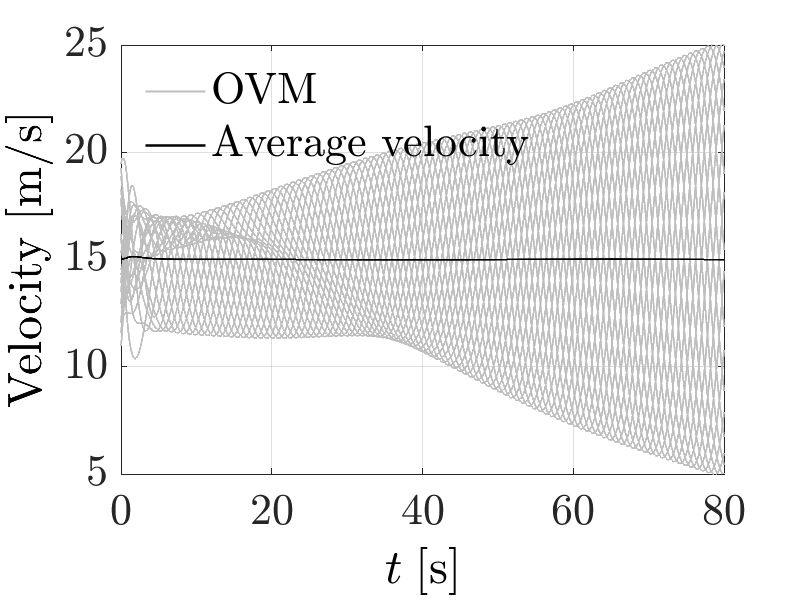}} 
    \caption{The traffic system with all unguided HVs under OVM and no controlled vehicle is unstable under the default parameters in Sec.~\ref{sec:experiment_setup}. The equilibrium spacing and velocity are $20m$ and $15m/s$. (a) The time-space diagram. Darker colors represent lower velocities. (b) The time-velocity diagram. The initial perturbation on the velocities get amplified, leading to the formation of stop-and-go waves in the system.}
    \label{fig:ovm_uncontrolled}
\end{figure}

\textbf{Simulation results with the default OVM parameters:} We study the behavior of the system by putting a piecewise-constant hold on the controller for $\Delta \gg T_{step}$ seconds. Without any controlled vehicle, the default OVM system is unstable (see Fig.~\ref{fig:ovm_uncontrolled}), forming stop-and-go waves gradually. Zheng et al.~\cite{zheng2020smoothing} show that introducing one autonomous vehicle with the continuous $\mathcal{H}_2$ optimal controller is able to stabilize the \textit{continuous system}. In Fig.~\ref{fig:H2_controlled}, we show the behavior of the sample-data traffic system by holding the same controller for $\Delta = 1.59s$ (left) and $\Delta = 2.29s$ (right). With a smaller hold length of $1.59s$, the controller is able to stabilize the system. However, with a slightly larger hold length of $2.29s$, we observe unstable system behavior, where holding the control piecewise-constant introduces an excessive amount of noise that breaks the system's stability. It is interesting to observe the sawtooth pattern in the time-velocity diagram in Fig.~\ref{fig:H2_controlled}d, where errors are accumulated within each holding period, but get corrected at the next holding period when we update the control. While there is system slowdown, the velocity perturbation is constrained within a range between $[7.5, 20]$ $m/s$, instead of getting amplified and diverging as in Fig.~\ref{fig:ovm_uncontrolled}. 

\begin{figure*}[!]
\centering
    \subfigure[]{\includegraphics[width=0.23\textwidth]{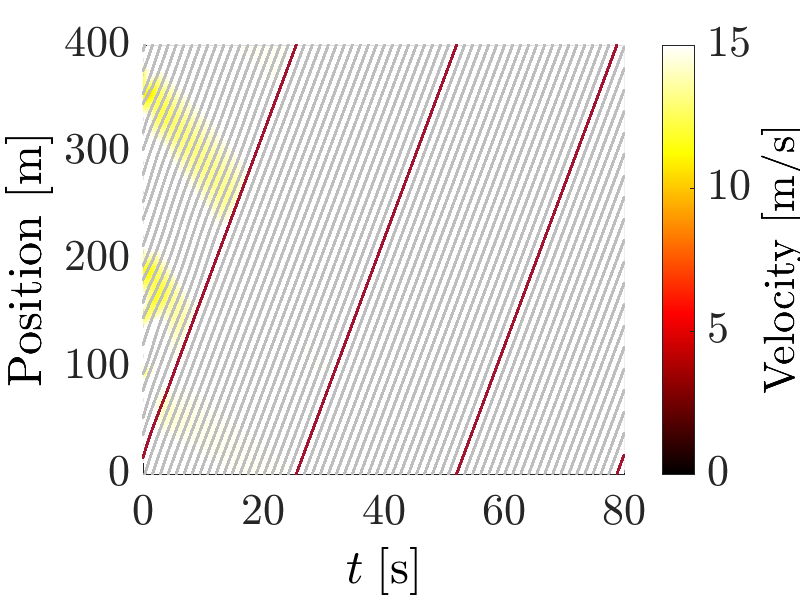}}
    \subfigure[]{\includegraphics[width=0.23\textwidth]{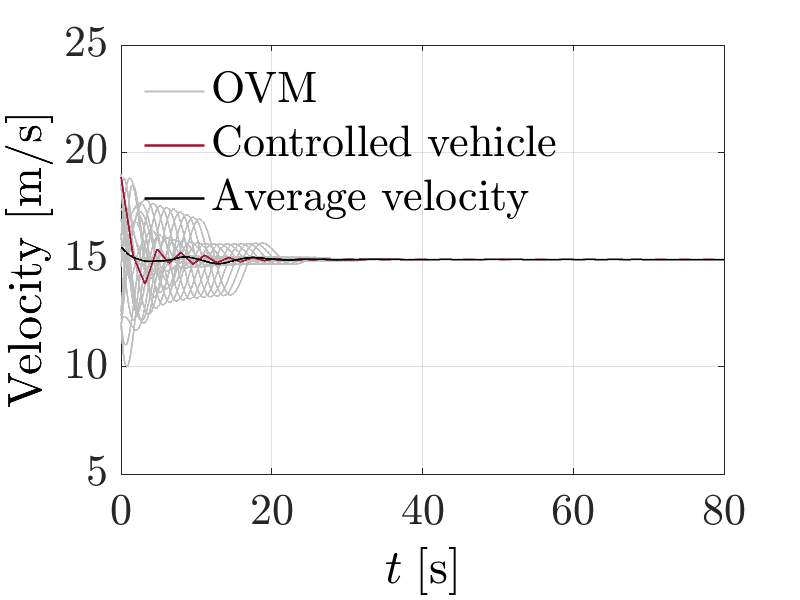}}
    \subfigure[]{\includegraphics[width=0.23\textwidth]{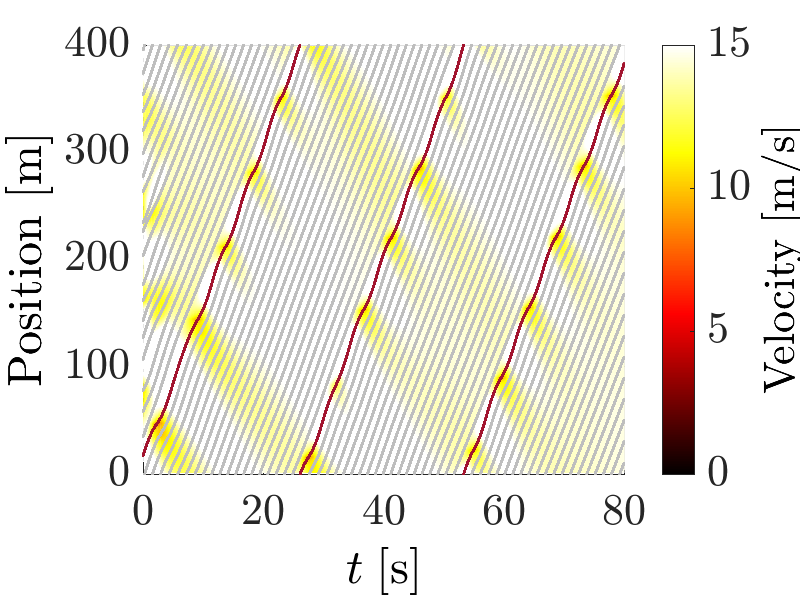}} 
    \subfigure[]{\includegraphics[width=0.23\textwidth]{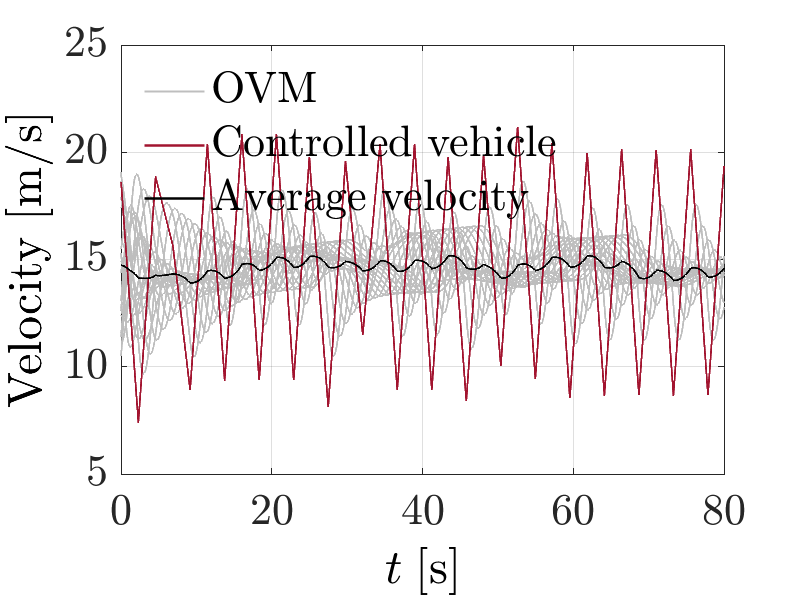}}
    \caption{The traffic system consists of $n-1$ unguided HVs under OVM (gray) and $1$ guided HV under piecewise-constant control (red) with different hold lengths $\Delta$s. The guided HV applies the same $\mathcal{H}_2$ optimal control gain matrix for the continuous system to the sample-data system, under the default parameters in Sec.~\ref{sec:experiment_setup}. (a) and (b): The time-space and time-velocity diagrams when $\Delta = 1.59s$. The traffic is stabilized to the equilibrium velocity $15m/s$ after a short amount of time. (c) and (d): The time-space and time-velocity diagrams when $\Delta = 2.29s$. The system becomes unstable when the hold length is too long. \vspace{-0.2cm}}
    \label{fig:H2_controlled}
\end{figure*}

\vspace{\squeezeSection}
\vspace{0.3cm}
\subsection{How well does the theory match simulation?}
\label{sec:experiment2}
Fig.~\ref{fig:sensitivity} provides the results of varying seven OVM system parameters $(L, n, v_{max}, s_{st}, s_{go}, v_{max}, \alpha, \beta)$ and three control parameters $(k_{mult}, \gamma_s, \gamma_v)$ (see Table~\ref{tab:params} for notation), comparing the theoretical hold limit estimates of Eq.~\eqref{eq:lyapunov_theory} and~\eqref{eq:Lyapunov_Krasovskii} with simulation hold limits. Ten analyses serve as a sensitivity analysis where we vary one parameter while fixing the others to default values. In the last analysis, we vary $(\alpha, \beta, s_{st}, s_{go})$ simultaneously while fixing the rest to default: we increase $s_{st}$ and decrease $s_{go}$, making the optimal velocity curve steeper (See Fig.~\ref{fig:interp}). Meanwhile, we decrease $\alpha$ and increase $\beta$ to allow human drivers to focus on the preceding vehicle when the optimal velocity becomes challenging to follow. The hold limit information identifies parameter regimes to guide the design of traffic systems and controllers for more effective coarse-grained guidance. For each analysis, we perform a binary search within $[0s, 10s]$ with a granularity of $T_{step} = 0.01s$ in simulation to find the empirical hold limit. We solve for a continuous $\mathcal{H}_2$ optimal controller using the respective system and control parameters. The scale of the $y\text{-}axis$ for Lyapunov analysis and OVM stability are given in Table~\ref{tab:scale}.

\begin{figure*}
    \centering
    \includegraphics[width=1.0\textwidth]{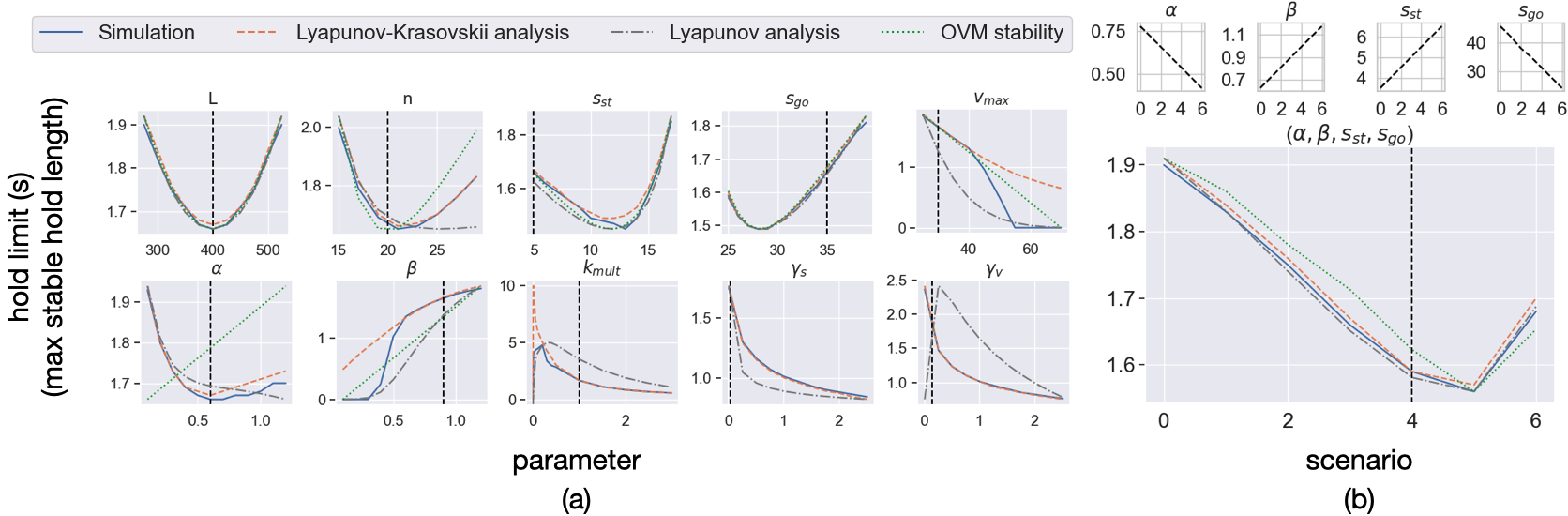}  
\caption{The hold limit that stabilizes the system from simulation (solid blue), Lyapunov-Krasovskii analysis in Eq.~\eqref{eq:Lyapunov_Krasovskii} (dashed orange), Lyapunov analysis in Eq.~\eqref{eq:lyapunov_theory} (dash-dotted gray), and uncontrolled OVM stability criterion in Eq.~\eqref{eq:string_ovm} (dotted green). Default parameter values are shown as the black vertical lines in each plot. The $y$-axis scale is for simulation and Lyapunov-Krasovskii analysis. (a) We vary each of the ten parameters one at a time while keeping the rest at default values. The $x$-axis represents the parameter value, and the $y$-axis scale for the Lyapunov analysis and the OVM stability are displayed in Table~\ref{tab:scale}. (b) We vary $(\alpha, \beta, s_{st}, s_{go})$ simultaneously, where the top four mini plots show the parameter values corresponding to each of the seven joint value sets. The $x$-axis represents the index of each value set (each scenario). The $y$-axis scale for the Lyapunov analysis is $(8.35\times10^{\text{-}4}, 1.80\times10^{\text{-}3})$, and for OVM stablity is $(\text{-}1.32, \text{-}1.91\times10^{\text{-}2})$. \vspace{-0.2cm}}
    \label{fig:sensitivity}
\end{figure*}

\begin{table}[!ht]
\centering
\begin{tabular}{ |c|c|c| } 
 \hline
 \textbf{Symbol} & \textbf{Lyapunov analysis} & \textbf{OVM stability} \\
 \hline
 \multicolumn{3}{|c|}{\textbf{\textit{System parameters}}} \\
 \hline
 $L$ & $(1.07\times10^{\text{-}3}, 1.81\times10^{\text{-}2})$ & $(\text{-}7.74\times10^{\text{-}1}, \text{-}5.99\times10^{\text{-}2})$ \\ 
 \hline
 $n$ & $(7.70\times10^{\text{-}4}, 3.19\times10^{\text{-}3})$ & $(\text{-}7.62\times10^{\text{-}1}, 2.94\times10^{\text{-}2})$\\
 \hline
 $s_{st}$ & $(6.82\times10^{\text{-}4}, 1.68\times10^{\text{-}3})$ & $(\text{-}1.29, \text{-}1.67\times10^{\text{-}1})$\\
 \hline
 $s_{go}$ & $(6.88\times10^{\text{-}4}, .56\times10^{\text{-}3})$ & $(\text{-}1.28, \text{-}2.79\times10^{\text{-}1})$ \\
 \hline
  $v_{max}$ & $(6.79\times10^{\text{-}5}, 1.72\times10^{\text{-}3})$ & $(\text{-}5.17, 1.76\times10^{\text{-}2})$\\
 \hline
 $\alpha$ & $(9.65\times10^{\text{-}4}, 1.89\times10^{\text{-}3})$ & $(\text{-}1.30, \text{-}8.66\times10^{\text{-}2})$\\
 \hline
 $\beta$ & $(7.44\times10^{\text{-}5}, 1.56\times10^{\text{-}3})$ & $(\text{-}2.45, \text{-}3.16\times10^{\text{-}2})$\\
 \hline
 \multicolumn{3}{|c|}{\textbf{\textit{Control parameters}}}\\
 \hline
 $k_{mult}$ & $(0, 3.02\times10^{\text{-}3})$ & - \\
 \hline
 $\gamma_s$ & $(1.08\times10^{\text{-}4}, 1.47\times10^{\text{-}3})$ & - \\
 \hline
 $\gamma_v$ & $(5.60\times10^{\text{-}4}, 1.17\times10^{\text{-}3})$ & - \\
 \hline
\end{tabular}
 \caption{Scales of the theoretical hold limit estimates (the minimum and maximum of $y\text{-}axis$ in Fig.~\ref{fig:sensitivity} for Lyapunov analysis and OVM stability.)}
 \label{tab:scale}
\end{table}

Examining the effectiveness of piecewise-constant control under different traffic conditions is a complex problem. Following the motivation of ``All models are wrong, but some are useful,'' it is attractive to consider these reduced-order linearized models as proxies for analyzing the true traffic problem. We thus consider three theoretical approaches for estimating the hold limit:
\begin{enumerate}
    \item The Lyapunov analysis: see Eq.~\eqref{eq:lyapunov_theory}; we set $c' = 1$. Due to redundancy in headway representation with $\tilde{s}_1 + \tilde{s}_2 ... + \tilde{s}_n = 0$, we first obtain the reduced representation by omitting $\tilde{s}_1$ from the state vector and replacing it with $-\tilde{s}_2 - ... -\tilde{s}_{n}$ to construct the reduced system matrices $A^\dagger, B^\dagger, K^\dagger$. Then, we set $Q = I_{(n-1)\times (n-1)}$ which has $\sigma_{min}(Q) = 1$ and solve for $P$ from the Lyapunov equation $(A^\dagger - B^\dagger K^\dagger)P + P(A^\dagger - B^\dagger K^\dagger)^\intercal = -Q$ to obtain $\sigma_{max}(P)$ in the denominator of Eq.~\eqref{eq:lyapunov_theory}. 
    \item The Lyapunov-Krasovskii analysis: see LMIs~(\ref{eq:Lyapunov_Krasovskii}). We perform a binary search within $[0s, 10s]$ with a granularity of $T_{step} = 0.01s$ to find the theoretical hold limit estimate such that the LMIs are feasible.
    \item The OVM stability: stability theory of the linearized, \textit{uncontrolled} system. Previous work~\cite{cui2017stabilizing} uses string stability to analyze the linearized, uncontrolled continuous OVM model, and derive the stability criteria
    $\alpha + 2\beta \geq 2\dot{V}(s^*) = 2\dot{V}(L / n)$. Equivalently, for $s^* = L / n \in [s_{st}, s_{go}]$, the OVM system is stable if 
    \vspace*{-0.2cm}
    \begin{equation}
    \alpha + 2\beta - v_{max} \frac{\pi}{s_{go} - s_{st}} \sin\big(\pi \frac{L / n - s_{st}}{s_{go} - s_{st}}\big) \geq 0  
    \label{eq:string_ovm}
    \end{equation}
    We plot the value of the left hand side in Fig.~\ref{fig:sensitivity}, which takes on negative values because we choose parameter values so that the uncontrolled system is unstable.  We use this estimate as a continuous proxy of the instability level in the uncontrolled system. A higher level of instability in the uncontrolled system (indicated by a more negative left-hand side) likely results in a shorter controller hold limit, as the system may require more frequently updated controls for stabilization.
\end{enumerate}

\textbf{Overall findings:} We observe that (1) both OVM stability and the Lyapunov do generally capture the trends quite well, (2) the Lyapunov-Krasovskii analysis captures not only the trend but also the absolute hold limit, indicating that the effect of linearizing the system is not a strong limitation of the approach, and (3) it is important to consider both the role of the controller (insufficiency of OVM stability to capture the trend, particularly in the case of $\alpha$) and the effect of the Lyapunov-Krasovskii integral (inadequacy of Lyapunov analysis to obtain the correct absolute scale). 

\vspace{-0.05cm}
\textbf{Lyapunov-Krasovskii Analysis:} the Lyapunov-Krasovskii analysis shares the same scale as the simulation, which is depicted as the numbers on the left of the $y\text{-}axis$. The Lyapunov-Krasovskii analysis is remarkably accurate in general, matching both the trend of the simulation and \textit{the absolute scale} of all parameters, whereas the other two theoretical methods only provide relative trend estimates. The Lyapunov-Krasovskii analysis overestimates the simulation hold limits for large $v_{max}$, small $\beta$, and small $k_{mult}$, however, where the unstable uncontrolled system results in collisions not modeled by the LMIs~(\ref{eq:Lyapunov_Krasovskii}), as discussed in Sec.~\ref{sec:setup_stability_criteria}. In such cases, the Lyapunov analysis gives a more accurate bound by more aggressively penalizing the worst-case behavior given by $\sigma_{max}(A)$ (unstable uncontrolled system) or $\sigma_{min}(Q)/\sigma_{max}(P)$ (controller with small magnitude).

\textbf{Lyapunov Analysis:} The Lyapunov analysis matches the trend of the simulation hold limits decently well, despite with smaller absolute scale than the simulation. While the worst-case singular value bounds in the Lyapunov analysis allow a more conservative estimate than Lyapunov-Krasovskii for large $v_{max}$ and small $\beta$, they become overly aggressive for large $n$ and $\alpha$. In such cases, Lyapunov-Krasovskii provides a better estimate by considering the interaction of $A$ (the uncontrolled system), $BK$ (the control) and $A-BK$ (the controlled system). Regarding $(k_{mult}, \gamma_s, \gamma_v)$, the Lyapunov analysis generally captures the correct trend, but with discrepancies in the absolute slopes or peaks. Since the analysis holds up to a scaling constant, the slope and peak location can vary depending on different scalings of $\sigma_{max}(A)$ and $\sigma_{max}(A_1)$. We keep equal scaling in the analysis for clarity of interpretation, and leave finding more accurate scalings to future work.

\vspace{-0.05cm}
\textbf{Uncontrolled OVM:} To our surprise, the uncontrolled OVM stability matches the trend of the simulation hold limits particularly well for a few parameters $(L, s_{st}, s_{go})$, and has only minor mismatch for $(v_{max}, \beta)$. As observed in the $L$ subplot in Fig.~\ref{fig:interp_subp}, in these cases, the trends of the uncontrolled system $\sigma_{max}(A)$ and the controller $\sigma_{max}(-BK)$ align well. However, discrepancies arise in the case of $n$ and particularly $\alpha$ (where opposite trends are observed), leading to misalignment between the simulation hold limits (with control) and the OVM stability (which solely considers the uncontrolled system $A$). In the misaligned cases, the impact of the controller that is captured by Lyapunov and Lyapunov-Krasovskii analysis is necessary for a more accurate trend estimate.

\begin{figure}[!tb]
    \centering
    \includegraphics[width=0.42\textwidth]{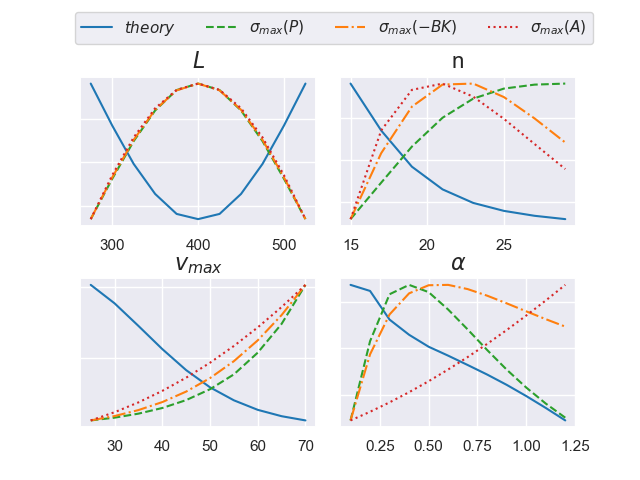}
    \caption{A visualization of components in the Lyapunov analysis (Eq. (\ref{eq:lyapunov_theory})) for four system parameters $L, n, v_{max}, \alpha$. We plot the denominator components $\sigma_{max}(P)$ that represents the continuous controlled system (dashed green), $\sigma_{max}(A)$ (dotted red) that represents the continuous uncontrolled system, $\sigma_{max}(A_1) = \sigma_{max}(-BK)$ that represents the control (dash-dotted orange), and the final theory bound on the hold limit $\Delta$ that stabilizes the system (solid blue). Note that numerator component $\sigma_{min}(Q) = 1$ by construction. The absolute scales of the different components are omitted. \vspace{-0.1cm}}
    \label{fig:interp_subp}
\end{figure}

\vspace{\squeezeSection}
\vspace*{0.2cm}
\subsection{How do traffic conditions affect the hold limit?}
\label{sec:interpretation}
In this section, we interpret relationships between system parameters $(L, n, s_{st}, s_{go}, v_{max}, \alpha, \beta)$, which represent different traffic conditions, and their respective hold limits. Overall, we observe three main types of traffic situations that promote longer hold limits by means of \textit{low driver sensitivity}: (1) traffic conditions (density, speed limit, and spacing thresholds) that promote a smoother spacing response, i.e., the flatter region of the optimal velocity function (through various combinations of $L, n, s_{go}, s_{stop}, v_{\max}$), (2) low sensitivity of drivers to relative position (low $\alpha$), and (3) high sensitivity of drivers to relative speed, which tends towards equilibrium (high $\beta$). 

\textbf{Smoother spacing response}: We observe that $(L, n, s_{st}, s_{go}, v_{max})$ determines various aspects of the optimal velocity function, as shown in Eq.~\eqref{eq:vclipping} and Fig.~\ref{fig:interp}. For example, the parameters $L$ and $n$ are related to the traffic density. Their ratio $s^* = L / n$ determines the equilibrium spacing, which in turn determines the desired optimal velocity $v^* = V(s^*)$, clipped within the range $[0, v_{max}]$. When the spacing is either too small (close to $s_{st}$) or too large (close to $s_{go}$), drivers can easily follow the desired optimal velocity by driving very slowly ($v^*$ is near $0$) or following the maximum speed ($v^*$ is near $v_{max}$). The resulting uncontrolled system hence tends to be more stable. However, when the spacing is close to $\frac{s_{go} - s_{st}}{2}$ (the red star in Fig.~\ref{fig:interp}), the original system becomes more unstable, since slight changes in spacing leads to large variations in the desired optimal velocity. In fact, the default $s_{st} = 5, s_{go} = 35$ place the default spacing $L / n = 20 m$ at the most unstable inflection point (the red star). Similar interpretations can be applied to the positioning of two boundary values $(s_{st}, s_{go})$. Notably, the hold limit variation in $(L, n, s_{st}, s_{go})$ is mild, ranging from $1.5s$ to $2s$ in simulation, as these four variables are all encapsulated within a cosine function of the desired optimal velocity.

\begin{figure}
    \centering
    \includegraphics[width=0.35\textwidth]{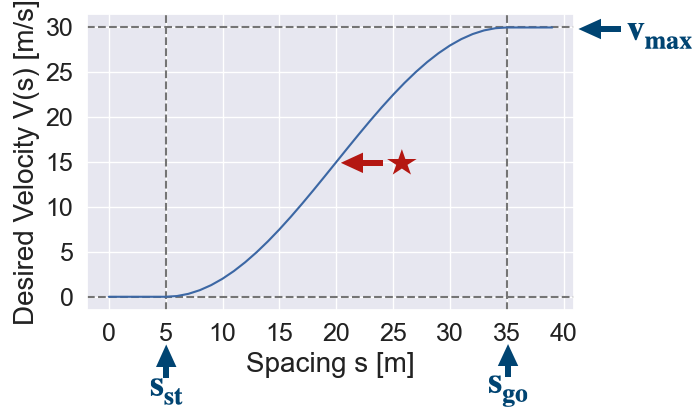}
    \caption{The Optimal Velocity function $V(s)$ in Eq.~\eqref{eq:vclipping} and~\ref{eq:ovm_f} with default parameters in Sec.~\ref{sec:experiment_setup}. The red star represents the equilibrium spacing and velocity with the default parameters, where the function attains maximum slope. Changing system parameters moves the red star to different positions on the curve, affecting the stability of the uncontrolled system and the hold limit to stabilize the system. \vspace{-0.2cm}}
    \label{fig:interp}
\end{figure}

In contrast, the maximum desired velocity (speed limit) $v_{max}$ acts as a multiplier for the desired optimal velocit and has a more substantial impact on the hold limit: as $v_{max}$ increases from $25m/s$ to $60m/s$, the hold limit decreases from $2s$ to $0s$. Increasing $v_{max}$ stretches the desired velocity curve, resulting in sharper changes of the desired optimal velocity in response to spacing variations. Consequently, a higher $v_{max}$ yields a more unstable uncontrolled system that leads to a shorter hold limit. 
While the stability of the uncontrolled system explains a linear decrease in the hold limit, we observe a super-linear decrease in the simulation due to two additional factors: (1) the larger magnitude of the controller, as shown in Fig.~\ref{fig:interp_subp}, introduces more errors to the system through the piecewise-constant hold, and (2) the unstable system leads to vehicle collisions, further complicating the task of stabilizing the system with a noisy controller.

\textbf{Low sensitivity to relative position, high sensitivity to relative speed}: The remaining parameters, $\alpha$ and $\beta$, indicate the sensitivity of human drivers to the desired optimal velocity ($\alpha$) and the velocity of the preceding vehicle ($\beta$) in comparison to the ego velocity. Interestingly, we observe different trends of the simulation hold limits for the two parameters, although larger $\alpha$ and $\beta$ both results in increased stability in the original uncontrolled system (Eq.~\eqref{eq:string_ovm}). For $\beta$, the inclusion of the velocity dissipation term enhances the driver's awareness of their surroundings, leading to improved system stability. The sharp, super-linear decrease in the hold limit for small $\beta$ values arises from similar factors as those affecting $v_{max}$, which combines (1) uncontrolled system's stability, (2) additional errors induced due to the controller's large magnitude, and (3) vehicle collisions when the system is excessively unstable. 

In contrast, for $\alpha$, the simulation hold limit displays an opposite trend to the stability of the uncontrolled system, albeit with relatively mild variation ($1.6s$ to $2s$). As observed in Fig.~\ref{fig:interp_subp}, large $\alpha$ results in more stable uncontrolled systems, but also larger controller magnitude, and hence holding the control piecewise-constant adds more noise to the system. This can be explained by the fact that a larger $\alpha$ corresponds to human drivers adhering more strongly to the suggested optimal velocity, resulting in a more stable uncontrolled OVM system. However, the optimal velocity prescribed by the OVM may conflict with the actions of the controlled vehicle. With both longer hold length and larger $\alpha$, the controlled vehicle may open up wider gaps, resulting in a stronger response from the following human drivers, in turn causing system instabilities. 

\textbf{Insights for traffic system design: } Based on the above interpretations, transporation designers can select system parameters to enable effective deployment of coarse-grained guidance, for example, by (1) adjusting speed limits or (2) ensuring roads provide clear visibility of the traffic or equipping vehicles with sensors for adaptive cruise control to enhance human driver's awareness to preceding traffic (higher $\beta$).
\mbox{}

\vspace{\squeezeSection}
\vspace*{0.2cm}
\subsection{Controller design for coarse-grained guided driving}

Thus far, we have focused on analyzing a \textit{given} controller, the continuous $\mathcal{H}_2$ optimal controller with a simulation hold limit of $1.66s$ by default.
In this section, we consider several approaches to intentionally design controllers for coarse-grained guidance to achieve system-level traffic flow stability. We keep the OVM system parameters $(L, n, s_{st}, s_{go}, v_{max}, \alpha, \beta)$ at the default values in Sec.~\ref{sec:experiment_setup}.

\textbf{Lyapunov-Krasovskii controller search:} Recall that the Lyapunov-Krasovskii analysis provides a method to obtain piecewise-constant controllers. Here, we examine the quality of the controllers via simulation: we first solve LMIs~(\ref{eq:Lyapunov_Krasovskii_control}) for a control gain matrix $K_{LK} = L Q^{-1}$, with a grid search of input hold length parameter $\Delta_{in} \in \{1, 2, 3, 4, 5, 6, 7, 8\}$. We fix the tuning parameter $\epsilon = 1$ in LMI (\ref{eq:Lyapunov_Krasovskii_control}) where we substitute $P_3 = \epsilon P_2$ from (\ref{eq:Lyapunov_Krasovskii}), as we empirically find such an $\epsilon$ yields the best controller with the longest simulation hold limit. Given the resulting control gain matrix $K_{LK}$, we then perform simulation via a binary search with a granularity of $T_{step}=0.01s$ to examine the empirical hold limit $\Delta_{sim}$.

Table~\ref{tab:LK_sim} displays the simulation hold limits $\Delta_{sim}s$ for different input parameters $\Delta_{in}s$. We observe that, as $\Delta_{in}$ increases, the Lyapunov-Krasovskii analysis finds better controllers with longer hold limits, reaching a hold limit of $4.55s$ at $\Delta_{in} = 3s$ that is a 2.7x improvement from the continuous $\mathcal{H}_2$ optimal controller. Fig.~\ref{fig:controller_vis} visualizes the controller profiles for the default continuous $\mathcal{H}_2$ optimal controller and the Lyapunov-Krasovskii controller with $\Delta_{in} = 3s$ by plotting the controllers' gain matrices ($K_{default}$ and $K_{LK}$) associated with the spacing and velocity ($\tilde{s}_i = s_i - s^*$ and $\tilde{v}_i = v_i - v^*$ $\forall i$). We observe that, for spacing, the Lyapunov-Krasovskii controller considers more vehicles ahead of the guided vehicle ($\tilde{s}_1$) than the continuous $\mathcal{H}_2$ optimal controller, placing the highest weight on the $5^{th}$ vehicle ahead ($\tilde{s}_5$). For velocity, while both controllers heavily focus on the ego guided vehicle ($\tilde{v}_1$), the Lyapunov-Krasovskii controller also takes into account a few vehicles located behind (e.g. positive weight on $\tilde{v}_{20}$ and $\tilde{v}_{19}$).

\begin{figure}[!tb]
     \centering
    \includegraphics[width=0.24\textwidth]{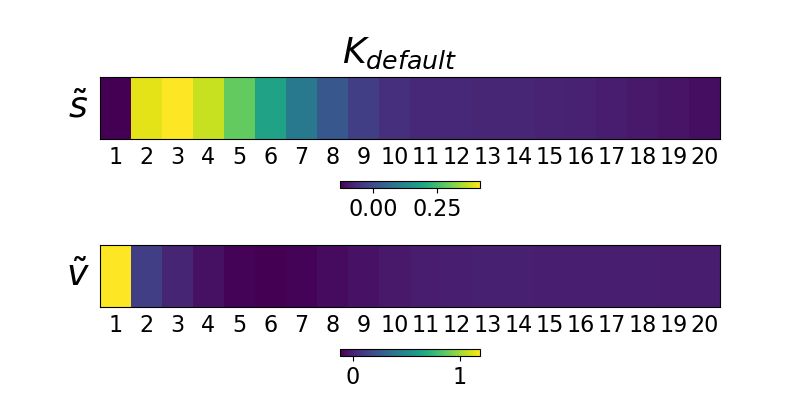}
    \includegraphics[width=0.24\textwidth]{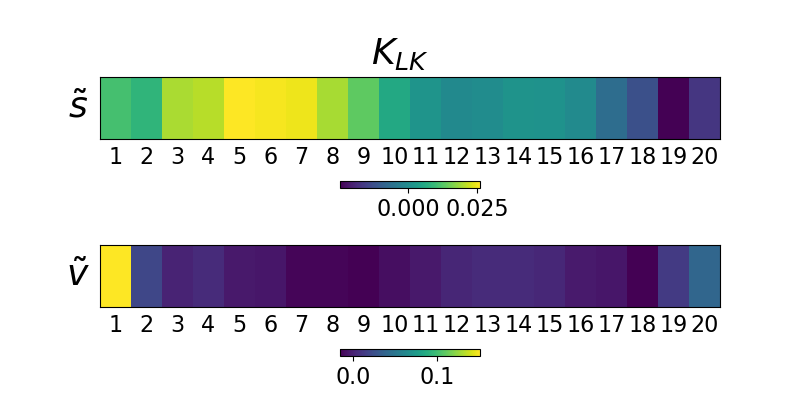}
    \caption{Controller profile visualization: (Left) the default continuous $\mathcal{H}_2$ optimal controller gain matrix $K_{default}$; (Right) the Lyapunov-Krasovskii controller gain matrix $K_{LK}$ with $\Delta_{in} = 3s$; (Up) the controllers' weights on the headway error states $\tilde{s}_i$; (Down) the controllers' weights on the velocity error states $\tilde{v}_i$. \vspace{-0.6cm}}
    \label{fig:controller_vis}
\end{figure}

\begin{table}[!ht]
\centering
\resizebox{.49\textwidth}{!}{
\begin{tabular}{ |c|c|c|c|c|c|c|c|c|c|c|c|} 
 \hline
 $\Delta\mathbf{_{in} (s)}$ & $1$ & $2$ & $3$ & $4$ & $5$ & $6$ & $7$ & $8$\\
 \hline
 $\Delta\mathbf{_{sim} (s)}$ &  $2.71$ & $3.33$ & $\mathbf{4.55}$ & $4.54$ & $4.3$ & $4.19$ & $4.09$ & $3.87$\\ 
  \hline
  \text{improv.} & 1.6x & 2.0x & 2.7x & 2.7x & 2.6x & 2.5x & 2.5x & 2.3x \\
  \hline
\end{tabular}
}
 \caption{The simulation hold limit $\Delta_{sim}$ with the Lyapunov-Krasovskii control gain matrix when we set $\Delta_{in} \in \{1, 2, ..., 8\}$ in LMIs~(\ref{eq:Lyapunov_Krasovskii_control}). The last row displays the relative improvement in $\Delta_{sim}$ from the continuous $\mathcal{H}_2$ optimal controller.}
 \label{tab:LK_sim}
\end{table}

When the input $\Delta_{in}$ further increases over $4s$, the simulation hold limit $\Delta_{sim}$ decreases, again due to collisions in the system not modeled by the theoretical analysis (as discussed in Sec.~\ref{sec:setup_stability_criteria}). Notably, same as in Sec.~\ref{sec:experiment2}, if we omit the collision constraint in simulation, we would achieve increasing simulation hold limits $\Delta_{sim}s$ with increasing large $\Delta_{in}s$, which confirms the ability of Lyapunov-Krasovskii analysis to stabilize the system if we ignore additional constraints.

\textbf{$\mathcal{H}_2$ re-scaling}: Next, we propose and examine a heuristic controller design where we vary the control parameters $(k_{mult}, \gamma_s, \gamma_v)$ to find scaled controllers more suitable for the sample-data system. In Fig.~\ref{fig:sensitivity}, we observe that controllers with smaller magnitudes than the default continuous $\mathcal{H}_2$ controllers, given by smaller $k_{mult} < 1$, $\gamma_s < 0.03$, $\gamma_v < 0.15$, result in longer hold limits. Notably, the longest hold limit of $4.78s$ is achieved when $k_{mult} = 0.2$, offering a 2.9x improvement from the default when $k_{mult} = 1$. This can be explained by the $\sigma_{max}(A_1)$ term in the denominator of the Lyapunov analysis in Eq.~\eqref{eq:lyapunov_theory}, where controllers of larger magnitudes incur larger errors from the piecewise-constant hold. However, when the controller is too small, it is not powerful enough to stabilize the system, resulting in a decrease in hold limit (e.g. from $4.78s$ when $k_{mult} = 0.2$ to $2.84s$ when $k_{mult} = 0.005$). This can be explained by the ratio $\sigma_{min}(Q) / \sigma_{max}(P)$ in the Lyapunov analysis, which represents the stability of the controlled system $A - k_{mult} BK$. Hence, there is a trade-off between the controller's power and the noise incurred from the piecewise-constant hold. 

Meanwhile, as the best simulation hold limit of the best scaled continuous $\mathcal{H}_2$ optimal controllers is around the same level as the Lyapunov-Krasovskii controllers ($4.55 - 4.78s$), we observe that the piecewise-constant controller obtained by scaling down a reasonable continuous control may perform well for coarse-grained guidance. Given abundant learning-based controllers developed for the continuous traffic systems~\cite{wu2021flow, yan2022unified}, promising strategies for coarse-grained guidance include taking the down-scaled versions of the existing controllers or finetuning the controllers with a penalization on the magnitude to avoid retraining the controllers from scratch.

\section{Extensions to Human Errors}
\textbf{Non-vanishing magnitude error:} We simulate our system with magnitude errors in Eq.~\eqref{eq:magnitude_error_system} under the default OVM parameter and the default continuous $\mathcal{H}_2$ optimal controller. To simulate the system under nonvanishing magnitude errors, we add $\sigma_i\cdot d_{nv}$ to the acceleration of each vehicle (the guided HV and $n-1$ unguided HVs following OVM), where $\sigma_i \stackrel{i.i.d.}{\sim} Bernoulli(0.5)$ and we vary $d_{nv}$ across a range of values. We confirm from the simulation that the system converges to a bounded region around equilibrium (the ultimate bound) under the same hold limit $\Delta=1.66s$ as the original system in Eq.~\eqref{eq:matrix_system}.
Fig.~\ref{fig:human_error} (Top) presents the simulation ultimate bound as a function of error magnitude $d_{nv}$. The ultimate bound is computed as the largest $\|x\|_2$, where $x$ consists of the headway and velocity error states of all vehicles, across the last 10\% of all $50$ simulation trajectories. We verify that the ultimate bound converges by comparing it with the largest $\|x\|_2$ across all states at the last 20\% and 30\% of the simulation trajectories. We observe a linear increase in the simulation ultimate bound as the error magnitude increases, which aligns with our theoretical extension in Prop.~\ref{prop:magnitude_error_lyapunov}. 

\begin{figure}[!thb]
     \centering
    \includegraphics[width=0.2\textwidth]{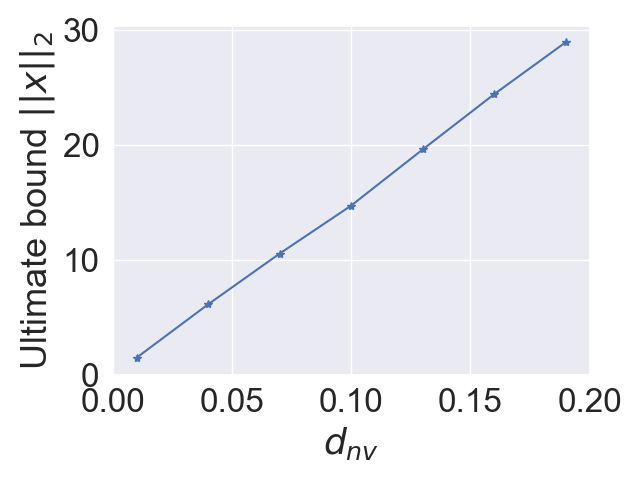}\\
    \includegraphics[width=0.2\textwidth]{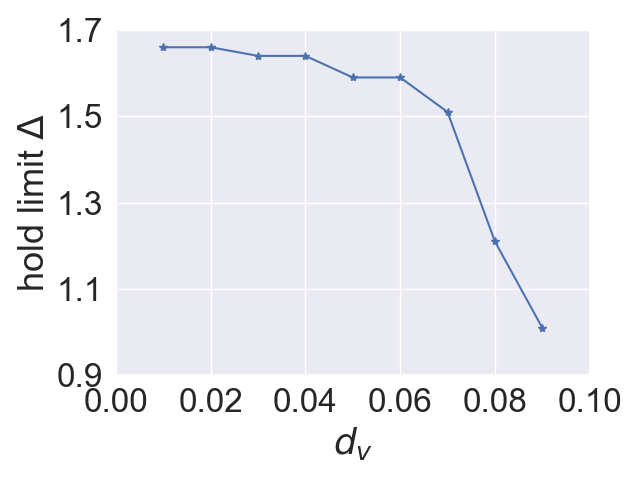}
    \includegraphics[width=0.2\textwidth]{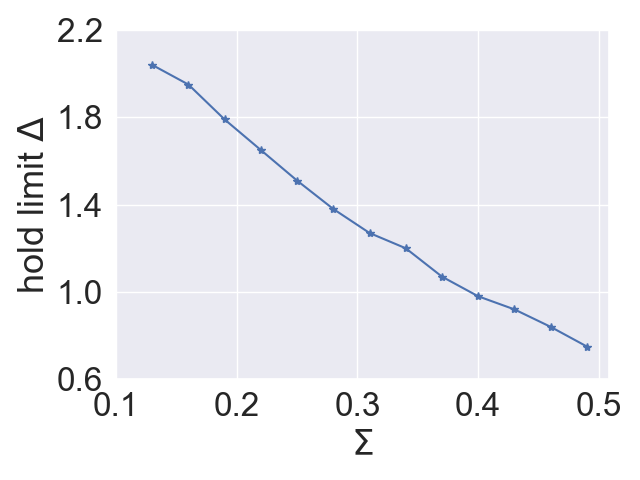}
    \caption{Top: \textit{nonvanishing magnitude error}. We plot the ultimate bound $\|x\|_2$ for a variety of magnitude $d_{nv}$. Bottom Left: \textit{vanishing magnitude error}. We plot the hold limit $\Delta$ for a variety of magnitude $d_{v}$. Bottom right: \textit{reaction delay}. We plot the hold limit $\Delta$ for a variety of human reaction delay $\Sigma$.}
    \label{fig:human_error}
\end{figure}

\textbf{Vanishing magnitude error:} Similarly, we simulate the system under vanishing magnitude errors by adding $\sigma_i\cdot d_{v} \cdot \|x(t)\|_2$ to the acceleration of each vehicle at simulation step $t$. $\sigma_i$ remains the same as mentioned earlier, and we vary $d_v$ across a range of values. As shown in Fig.~\ref{fig:human_error} (Bottom left), the system converges to equilibrium with a smaller simulation hold limit that exhibits a linear decrease within $d_v \in [0, 0.07]$. Our theoretical extension in Prop.~\ref{prop:lyapunov} confirms the linear decrease near equilibrium. Beyond $d_v = 0.07$, the hold limit decreases at a faster rate. As the vanishing perturbation scales proportionally with the norm of the state, a large magnitude error (large $d_v$) pushes the system away from the equilibrium, leading to a compounding effect that amplifies the error magnitude and moves the system even farther away. While the linearization of our theoretical analysis provides accurate guarantees near equilibrium, its accuracy diminishes as the system moves further away. We leave as a future work to enhance the theoretical analysis for the case of large $d_v$.

\textbf{Reaction delay:} Finally, we simulate our system with human reaction delay in Eq.~\eqref{eq:reaction_delay_system} under the default OVM parameter and the default continuous $\mathcal{H}_2$ optimal controller across a range of reaction delay values $\Sigma$. In Fig.~\ref{fig:human_error} (Bottom right), we observe the simulation hold limit $\Delta$ decreases linearly as the human reaction delay $\Sigma$ increases. The simulation result validates our theoretical extension in Prop.~\ref{prop:reaction_delay_lyapunov} that establishes a linear relationship between the increase in reaction delay and the decrease in the hold limit.
\color{black}

\section{Conclusion}
This work presents an integrated Lyapunov analysis framework of coarse-grained guidance, a class of policies that aim to guide human drivers to stabilize the traffic to bypass the difficulty of AV deployment. We derive both a Lyapunov analysis for qualitative interpretation of the relationships between traffic system parameters and the hold limit, and a Lyapunov-Krasovskii analysis for quantitative estimation of the hold limit and for controller design. Our work highlights the Lyapunov analysis framework as an important integrated theoretical tool for obtaining efficient, safe, and sustainable transportation systems under coarse-grained guidance.

We propose a few important directions for future research. First, we would like to tighten the derivation of the Lyapunov analysis (Eq.~\eqref{eq:lyapunov_theory}) to obtain absolute scales of different components in the bound. The correct scaling will enable us to pinpoint the exact slope and location of the optimum of the curves in Fig.~\ref{fig:sensitivity}, while the current bound is only able to describe the relative trend. Next, we would like to incorporate control barrier functions to the Lyapunov-Krasovskii analysis (LMIs~(\ref{eq:Lyapunov_Krasovskii}) and (\ref{eq:Lyapunov_Krasovskii})) to tighten the bound under unsafe events such as collision. Finally, we would like to consider expanding our theory to a broader class of human-compatible driving policies that consist of other easy-to-follow driving instructions, as well as to more complex traffic scenarios. 

\section*{ACKNOWLEDGMENT}
This work was supported by the National Science Foundation (NSF) under grant number 2149548, the MIT Amazon Science Hub, the MIT Energy Initiative (MITEI) Mobility Systems Center, MIT’s Research Support Committee, as well as a gift from Mathworks.

\vspace{-0.2cm}
\bibliography{references}

\vspace{\squeezeSection}
\vspace{\squeezeSection}

\begin{IEEEbiography}[{\includegraphics[width=1in,height=1.25in,clip,keepaspectratio]{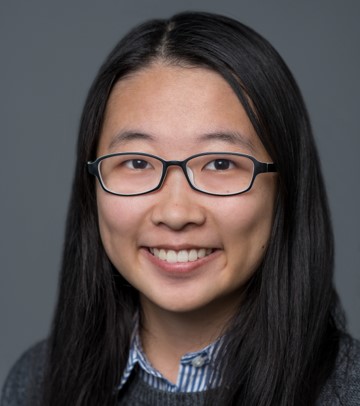}}]{Sirui Li} received the B.S. degree with majors in computer science and mathematics from Washington University in St. Louis in 2019. She is currently working toward the Ph.D. degree in Social and Engineering System at Massachusetts Institute of Technology, Cambridge.
Her research interests include areas of machine learning for combinatorial optimization and control analysis for transportation systems.
\end{IEEEbiography}
\vspace{-1.6 cm}

\begin{IEEEbiography}[{\includegraphics[width=1in,height=1.25in,clip,keepaspectratio]{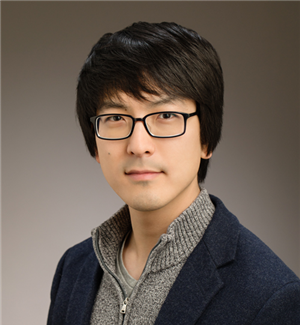}}]{Roy Dong} is an Assistant Professor in the Industrial \& Enterprise Engineering department at the University of Illinois at Urbana-Champaign. He received a BS (Hons.) in Computer Engineering and a BS (Hons.) in Economics from Michigan State University in 2010 and the PhD in Electrical Engineering and Computer Sciences from UC Berkeley in 2017. Prior to his current position, he was a postdoctoral researcher in the Berkeley Energy \& Climate Institute, a visiting lecturer in the Industrial Engineering and Operations Research department at UC Berkeley, and a Research Assistant Professor in the Electrical and Computer Engineering department at the University of Illinois at Urbana-Champaign. 
\end{IEEEbiography}
\vspace{-1.4 cm}
\begin{IEEEbiography}[{\includegraphics[width=1in,height=1.25in,clip,keepaspectratio]{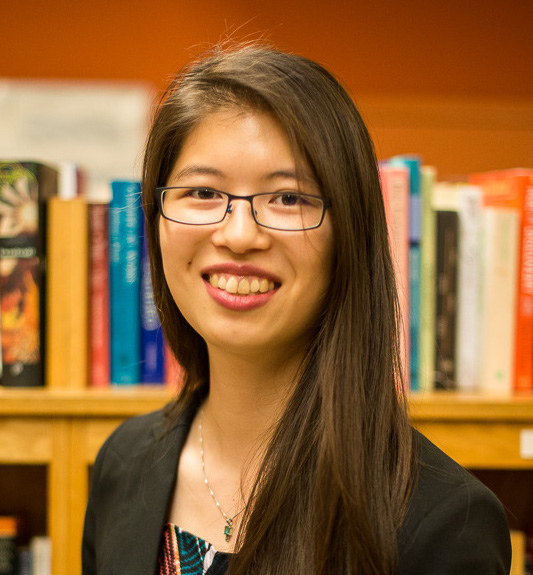}}]{Cathy Wu} Cathy Wu is an Assistant Professor at MIT in LIDS, CEE, and IDSS. She holds a Ph.D. from UC Berkeley, and B.S. and M.Eng. from MIT, all in EECS, and completed a Postdoc at Microsoft Research. Her research interests are at the intersection of machine learning, control, and mobility. Her recent research focuses on how learning-enabled methods can better cope with the complexity, diversity, and scale of control and operations in mobility systems. She is broadly interested in developing principled computational tools to enable reliable decision-making in sociotechnical systems.
\end{IEEEbiography}

\newpage
\section*{APPENDIX}
\subsection{Proofs for extensions to human errors}
\subsubsection{Proof of Proposition~\ref{prop:magnitude_error_lyapunov}}\label{proof:magnitude_error_lyapunov}
\begin{proof} 
We can follow the same Lyapunov Analysis in Sec.~\ref{sec:lyapunov1} with the modified dynamics~(\ref{eq:magnitude_error_system}), and arrive at the following equation
\begin{equation}
\begin{aligned}
& \quad \left(V(x(t)) - V(x(t_k))\right)_{new} \\
& = \left(V(x(t)) - V(x(t_k))\right)_{old} \\
& + \langle 2x(t_k)^\intercal P,\; B_d d(t_k)\rangle + \langle 2x(t_k)^\intercal P, \; B_d (d(t^*) - d(t_k)) \rangle\\
& + \langle 2(x(t^*) - x(t_k))^\intercal P,\; B_d d(t_k)\rangle) (t - t_k). \label{eq:appendix_magnitude_error_lyapunov_block1}
\end{aligned}
\end{equation}
where $\left(V(x(t)) - V(x(t_k))\right)_{old}$ is the same as Eq.~\eqref{eq:lyapunov_theory_block1} and can be bounded by Eq.~\eqref{eq:lyapunov_theory_block5}. We can bound the additional human error terms (the last three terms), denoted as $P_m$, as
\begin{equation}
\begin{aligned}
    P_m & \leq \sigma_{max}(B_dP + P B^\intercal_d) \big(\|x(t_k)\|_2 \|d(t_k)\|_2 \\
    & + \|x(t_k)\|_2\|d(t^*) - d(t_k)\|_2 + \|x(t^*) - x(t_k)\|_2\|d(t_k)\|_2\big).\label{eq:d_perturb}
\end{aligned}
\end{equation}
Substituting $\|x(t^*) - x(t_k)\|_2 \leq \|x(t^*)\|_2 + \|x(t_k)\|_2, \|d(t^*) - d(t_k)\|_2 \leq \|d(t^*)\|_2 + \|d(t_k)\|_2$, we obtain an upper bound on Eq.~\eqref{eq:d_perturb} as follows:
\begin{equation}
\begin{aligned}
    P_m & \leq \sigma_{max}(B_dP + P B^\intercal_d) \big(3\|x(t_k)\|_2 \|d(t_k)\|_2 \\
    & + \|x(t_k)\|_2\|d(t^*)\|_2 + \|x(t^*)\|_2\|d(t_k)\|_2 \big).
\end{aligned}
\end{equation}
\textit{Nonvanishing Perturbation:} We have 
\begin{equation}
\begin{aligned}
    P_m & \leq 5 \sigma_{max}(B_dP + P B^\intercal_d) \bar{d}_{nv} \max\limits_{s \in [t_k, t_{k+1}]} \left\|x(s)\right\|_2.
\end{aligned}
\end{equation}
Plugging the above into Eq.~\eqref{eq:appendix_magnitude_error_lyapunov_block1}, we get 
\begin{equation}
\begin{aligned}
    & \quad \left(V(x(t)) - V(x(t_k))\right)_{new} \\
    & \leq (t-t_k)\Big(-N \|x(t_k)\|_2 + M \Big)\max\limits_{s \in [t_k, t_{k+1}]} \|x(s)\|_2. \label{eq:appendix_magnitude_error_lyapunov_block2}
\end{aligned}
\end{equation}
where we denote
\begin{equation}
    \begin{aligned}
        N & := \sigma_{\min}(Q) - c \Delta \cdot \sigma_{\max}(P) (\sigma_{\max}(A) + \sigma_{\max}(A_1))^2, \\
        M & := 5 \sigma_{max}(B_dP + P B^\intercal_d) \bar{d}_{nv}.
    \end{aligned} \label{eq:appendix_magnitude_error_lyapunov_block3}
\end{equation}
We hence arrive at an ultimate bound~\cite{khalil2001nonlinear} as follows
\begin{equation}
    \begin{aligned}
        - N \|x(t_k)\|_2 + M &= -(1-\theta) N \|x(t_k)\|_2 - \theta N \|x(t_k)\|_2 + M, \\ & \quad \quad \quad \forall \; 0 < \theta < 1\\
    \Rightarrow - N \|x(t_k)\|_2 + M & \leq -(1-\theta)N\|x(t_k)\|_2, \quad \forall \; \|x(t_k)\|_2 \geq \frac{M}{\theta N}\\
    \Rightarrow - N \|x(t_k)\|_2 + M & \leq -\frac{1}{2}N\|x(t_k)\|_2, \quad \forall \; \|x(t_k)\|_2 \geq \frac{2M}{N} \\
        & \quad \quad \quad \text{ by setting } \theta = \frac{1}{2}.
    \end{aligned}
    \label{eq:ultimate_bound_deriv1}
\end{equation}
From the original proof we know for some $d > 1$, 
\begin{equation}
    \begin{aligned}
        \Delta \; & \leq c' \frac{\sigma_{\min}(Q)}{\sigma_{\max}(P) (\sigma_{\max}(A) + \sigma_{\max}(A_1))^2}  \Leftrightarrow \;\; -N 
        \leq -\frac{1}{d}\sigma_{\min}(Q). \\
    \end{aligned}
\end{equation}
By Eq.~\eqref{eq:ultimate_bound_deriv1}, the above bound on the hold limit $\Delta$ results in the following decrease in the Lyapunov value:
\begin{equation}
    \left(V(x(t)) - V(x(t_k))\right)_{new} \leq -\frac{1}{2d}\sigma_{min}(Q)(t-t_k)\max\limits_{s \in [t_k, t_{k+1}]}\|x(s)\|_2, 
\end{equation}
if the state is outside the following bounded region 
\begin{equation}
    \|x(t)\|_2 \geq \frac{10 \sigma_{max}(B_dP + P B^\intercal_d)}{1/d\sigma_{\min}(Q)}\bar{d}_{nv}.
\end{equation}
Combining the above, we conclude the convergence of the trajectory within the bounded region around equilibrium (the ultimate bound) in Prop.~\ref{prop:magnitude_error_lyapunov} under vanishing error, when the hold limit is the same as in the original system~\eqref{eq:matrix_system}.\\
\textit{Vanishing Perturbation:} We have 
\begin{equation}
    P_m \leq 5 \sigma_{max}(B_dP + P B^\intercal_d) \bar{d}_v\|x(t_k)\|_2 \max\limits_{s \in [t_k, t_{k+1}]}\|x(s)\|_2.
\end{equation}
Plugging the above into Eq.~\eqref{eq:appendix_magnitude_error_lyapunov_block1}, we get 
\begin{equation}
    \begin{aligned}
        & \quad \left(V(x(t)) - V(x(t_k))\right)_{new} \\
        & \leq (t-t_k)\big(-N + M) \|x(t_k)\|_2 \max\limits_{s \in [t_k, t_{k+1}]} \|x(s)\|_2.
    \end{aligned}
\end{equation}
With the same N and M as in Eq.~\eqref{eq:appendix_magnitude_error_lyapunov_block3}, with $\bar{d}_{nv}$ replaced by $\bar{d}_{v}$. Following a similar proof as in Sec.~\ref{sec:lyapunov1}, in order for $\left(V(x(t)) - V(x(t_k))\right)_{new}$ to have a sufficient decrease for the system to converge to equilibrium, e.g. for some $d > 1$, 
\begin{equation}
    \resizebox{.49\textwidth}{!}{
    $\left(V(x(t)) - V(x(t_k))\right)_{new} \leq -(t-t_k) \frac{\sigma_{\min}(Q)}{d} \|x(t_k)\|_2 \max\limits_{s \in [t_k, t_{k+1}]} \|x(s)\|_2,$
    }
\end{equation}
the following establishes a sufficient condition that gives a slightly smaller $\Delta$ due to the vanishing magnitude error (reflected in the additional term in the numerator).
\begin{equation}
     - N + M \leq \frac{\sigma_{\min}(Q)}{d} \Leftrightarrow \;\; \Delta \; \leq c' \frac{\sigma_{\min}(Q) - 5 \sigma_{max}(B_dP + P B^\intercal_d)\bar{d}_v}{\sigma_{\max}(P) (\sigma_{\max}(A) + \sigma_{\max}(A_1))^2}, 
\end{equation}
for the same $c' > 0$.
\end{proof}

\subsubsection{Proof of Proposition~\ref{prop:magnitude_error_lk}}
\begin{proof} \label{proof:magnitude_error_lk} The Lyapunov-Krasovskii functional for our sample-data system with $P > 0, U > 0$ is
\begin{equation}
    \begin{aligned}
         V(t, x(t), \dot{x}(t)) = x^\intercal(t) P x(t) + (\Delta - \tau(t)) \int_{t - \tau(t)}^{t}\dot{x}^\intercal(s) U \dot{x}(s) ds,
    \end{aligned}
\end{equation}
as $\dot{\tau}(t) = 1$ and $\frac{d}{dt}[t - \tau(t)] = 1 - \dot{\tau}(t) = 0$, the above satisfies
\begin{equation}
    \begin{aligned}
    \frac{\partial}{\partial t}V(t, x(t), \dot{x}(t)) &  = 2\dot{x}^\intercal(t) P x(t) - \int_{t - \tau(t)}^{t}\dot{x}^\intercal(s) U \dot{x}(s) ds \\
    & + (\Delta - \tau(t))\dot{x}^\intercal(t) U \dot{x}(t).
    \label{eq:magnitude_error_lk_deriv1}
    \end{aligned}
\end{equation}
Following the original proof, we let $v_1 = \frac{1}{\tau(t)}\int_{t - \tau(t)}^{t} \dot{x}(s) ds$, so $\int_{t - \tau(t)}^{t}\dot{x}^\intercal(s) U \dot{x}(s) ds \geq \tau(t) v_1^\intercal U v_1$ by Jensen's inequality. 
As our dynamics under magnitude error~(\ref{eq:magnitude_error_system}) can be written as
\begin{equation}
    \dot{x}(t) = (A + A_1)x(t) - \tau(t) A_1 v_1 + B_d d(t).
\end{equation}
We use the descriptor method to bound the positive terms (first and third) in Eq.~\eqref{eq:magnitude_error_lk_deriv1} with
\begin{equation}    
    \resizebox{.49\textwidth}{!}{
        $0 = 2[x(t)^\intercal P_2^\intercal + \dot{x}(t)^\intercal P_3^\intercal ][(A + A_1)x(t) - \tau(t)A_1 v_1 + B_d d(t) - \dot{x}(t)]$.
    }
\end{equation}
where $P_2, P_3$ are slack matrices. We can write Eq.~\eqref{eq:magnitude_error_lk_deriv1} as 
\begin{equation}
    \begin{aligned}
        \frac{\partial}{\partial t}V(t, x(t), \dot{x}(t)) & \leq \eta_1(s)^\intercal(t) \Psi_s \eta_1(s) \\
        & + x(t)^\intercal P_2^\intercal B_d d(t) + \dot{x}(t)^\intercal P_3^\intercal B_d d(t) \\
        & = \eta_2^\intercal(t) \tilde{\Psi}_s \eta_2(s) < -\epsilon |x(t)|^2 \text{ for some } \epsilon > 0, 
    \end{aligned}
\end{equation}
where $\eta_1 = col\{x(t), \dot{x}(t), v_1\}$, $\eta_2 = col\{x(t), \dot{x}(t), v_1, d(t)\}$, $\Psi_s$ is the upper left $3\times 3$ block of $\tilde{\Psi}_s$, and
\begin{equation}
\begin{aligned}
    \tilde{\Psi}_s = \begin{bmatrix}
    \Phi_{11} & P - P_2^\intercal + (A + A_1)^\intercal P_3 & -\tau(t) P_2^\intercal A_1 & P_3^\intercal B_d\\
    * & -P_3 - P_3^\intercal + (\Delta - \tau(t))U & -\tau(t) P_3^\intercal A_1 & P_3^\intercal B_d\\
    * & * & -\tau(t) U & 0\\
    * & * & * & 0
    \end{bmatrix}.
\end{aligned}
\end{equation}
The system is then $H_{\infty}$ robust at disturbance attenuation level $\gamma > 0$ if the following $LMI$s holds
\begin{equation}
\begin{aligned}
    \begin{bmatrix}
    \Phi_{11} + I& P - P_2^\intercal + (A + A_1)^\intercal P_3 & -\tau(t) P_2^\intercal A_1 & P_2^\intercal B_d\\
    * & -P_3 - P_3^\intercal + (\Delta - \tau(t)) U & -\tau(t) P_3^\intercal A_1 & P_3^\intercal B_d\\
    * & * & -\tau(t) U & 0 \\
    * & * & * & -\gamma^2 I
    \end{bmatrix}  < 0, 
\end{aligned}
\end{equation}
as we would have $\frac{\partial}{\partial t} V(t) + x(t)^\intercal x(t) - \gamma^2 d(t)^\intercal d(t) < 0$, with $lim_{t \rightarrow \infty} V(t) = 0$, under the zero initial condition. Integrating both sides of the above from $0$ to $+\infty$, we recover the $H_{\infty}$ robustness in Definition~\ref{def:Hinf_robust}. As $0 \leq \tau(t) \leq \Delta$, we let $\tau(t)\rightarrow 0$ and $\tau(t) \rightarrow \Delta$ and get Eq.~\eqref{eq:magnitude_error_Lyapunov_Krasovskii}.
\end{proof}

\subsubsection{Proof of Proposition~\ref{prop:reaction_delay_lyapunov}}\label{proof:reaction_delay_lyapunov}
\begin{proof} 
Again, we can follow the same Lyapunov Analysis in Sec.~\ref{sec:lyapunov1} with the modified dynamics~(\ref{eq:reaction_delay_system}), and arrive at the following equation
\begin{equation}
\begin{aligned}
& \quad \left(V(x(t)) - V(x(t_k))\right)_{new} \\
& = \left(V(x(t)) - V(x(t_{k'}))\right)_{old} + \langle 2x(t^*)^\intercal P,\; A_1 (x(t_k) - x(t_{k'}))\rangle .\label{eq:reaction_delay_deriv1}
\end{aligned}
\end{equation}
The first term can be bounded using the same analysis as in the original proof as
\begin{equation}
\begin{aligned}
\leq  &  \left(-\sigma_{min}(Q) + c \Delta \cdot \sigma_{\max}(P) (\sigma_{\max}(A) + \sigma_{\max}(A_1))^2 \right)\\
& \|x(t_{k'})\|_2 \max\limits_{s \in [t_{k'}, t_{k+1'}]} \|x(s)\|_2,  \text{ for some constant } c.
\end{aligned}
\end{equation}
The second term, denoted as $P_r$, is the additional noise terms from human reaction delay. We have
\begin{equation}
\begin{aligned}
    & \left\|x(t_{k'}) - x(t_k)\right\|_2 = \left\|\int_{t_k}^{t_{k'}} \dot{x}(s) ds\right\|_2 \leq (t_{k'} - t_k) \max\limits_{s \in [t_{k}, t_{k'}]} \|\dot{x} (s)\|_2.
\end{aligned}
\end{equation}
Notably, within the internal $[t_k, t_k')$, the guided HV follows the instruction from the previous holding period $[t_{k-1'}, t_k')$. Under the following assumption (where the constant $\bar{D}_v \geq 0$ indicates how smooth the instructions are updated):
\begin{equation}
    \begin{aligned}
    \|\dot{x} (s)\|_2 & \leq \bar{D}_{v} \max\limits_{s \in [t_{k'}, t_{k'+1}]}\|\dot{x}(t_{s})\|_2 \quad \text{ for some constant } \bar{D}_v\\
    & \leq \bar{D}_{v} \sigma_{max}(A +A_1)\max\limits_{s \in [t_{k'}, t_{k'+1}]}\|x(t_{s})\|_2 \quad \forall s \in [t_{k}, t_{k'}],
    \end{aligned}
\end{equation}
the second term can be bounded by
\begin{equation}
    \begin{aligned}
        P_r & \leq \sigma_{max}(A_1 P + P A_1^\intercal) \Sigma \bar{D}_{v} \sigma_{max}(A +A_1) \max\limits_{s \in [t_{k'}, t_{k'+1}]}\|x(t_{s})\|_2 \\
        & \leq \sigma_{max}(P)(\sigma_{max}(A) + \sigma_{max}(A_1))^2\Sigma \bar{D}_v \max\limits_{s \in [t_{k'}, t_{k'+1}]}\|x(t_{s})\|_2
        \label{eq:bound_on_deriv}
    \end{aligned}
\end{equation}
Hence, we follow the same argument as in Prop.~\ref{prop:magnitude_error_lyapunov} (vanishing perturbation) and conclude a slightly smaller $\Delta$ for the system to converge to equilibrium (due to reaction delay, as reflected in the additional term in the numerator).
\begin{equation}
    \begin{aligned}
        & \sigma_{\max}(P) (\sigma_{\max}(A) + \sigma_{\max}(A_1))^2 \left(c \Delta  + \Sigma \bar{D}_{v} \right) \leq \frac{d-1}{d}\sigma_{\min}(Q) \\
        \Leftrightarrow \;\; & \Delta \; \leq c' \frac{\sigma_{\min}(Q) }{\sigma_{\max}(P) (\sigma_{\max}(A) + \sigma_{\max}(A_1))^2} - c'' \Sigma \bar{D}_v, 
    \end{aligned}
\end{equation}
for some $c' > 0, c'' > 0$. 
\end{proof}

\end{document}